\date{}
\newtheorem{lem}{Lemma}[section]
\newtheorem{thm}[lem]{Theorem}
\newtheorem{prop}[lem]{Proposition}
\newtheorem{coro}[lem]{Corollary}
\newtheorem{defn}[lem]{Definition}
\newtheorem{rmk}[lem]{Remark}
\begin{document}
\title{Constructing vectorial bent functions via second-order derivatives }

\author{Lijing Zheng~$^{\diamond}$, Jie Peng~$^{*}$, Haibin Kan~$^{\diamond }$, Yanjun Li
\thanks{ J. Peng and Y. Li are with Mathematics and Science College of
Shanghai Normal University, Guilin Road \#100, Shanghai, China, 200234, (E-mails:~jpeng@shnu.edu.cn,~yanjlmath90@163.com).
\newline \indent $^{\diamond}$ L. Zheng and H. Kan are with the School of Computer Sciences, Fudan University, Handan Road \#220,
Shanghai, 200433, China; L. Zheng is also with the School of Mathematics and Physics, University of South China, Changsheng Road \#28, Hengyang, Hunan, 421001, China, (E-mails:~zhenglijing817@163.com, hbkan@fudan.edu.cn).}}

\maketitle

\begin{abstract}
Let $n$ be an even positive integer, and $m<n$ be one of its positive divisors. In this paper, inspired by a nice work of Tang et al. on constructing large classes of bent functions from known bent functions [27, IEEE TIT, 63(10): 6149-6157, 2017], we consider the construction of vectorial bent and vectorial plateaued $(n,m)$-functions of the form $H(x)=G(x)+g(x)$, where $G(x)$ is a vectorial bent $(n,m)$-function, and $g(x)$ is a Boolean function over $\mathbb{F}_{2^{n}}$. We find an efficient generic method to construct vectorial bent and vectorial plateaued functions of this form by establishing a link between the condition on the second-order derivatives and the key condition given by [27]. This allows us to provide (at least) three new infinite families of vectorial bent functions with high algebraic degrees. New vectorial plateaued $(n,m+t)$-functions are also obtained ($t\geq 0$ depending on $n$ can be taken as a very large number), two classes of which have the maximal number of bent components.

\end{abstract}
\noindent {\bf Index Terms:} Bent functions, vectorial bent, algebraic degree, Walsh spectrum.
\medskip

\section{Introduction}
Throughout this paper, we often identify the finite field $\mathbb{F}_{2^{n}}$ with $\mathbb{F}^{n}_{2}$, the $n$-dimensional vector space over $\mathbb{F}_{2}$. Any function $F: \mathbb{F}_{2^{n}}\rightarrow\mathbb{F}_{2^{m}}$ is called an {\it $(n,m)$-function}, which is also called a {\it Boolean function} when $m=1$.
{\it Bent functions} have been introduced by Rothaus in 1976 \cite{Rothaus}. They are Boolean functions in even number of variables which are maximally nonlinear in the sense that their Hamming distance to all affine Boolean functions is optimal. It corresponds to the fact that the {\it Walsh transform} of a bent function in $n$ variables takes precisely the values $\pm 2^{\frac{n}{2}}$. Over the last four decades, bent functions have attracted a lot of research interest because of their applications in coding theory, combinatorics and cryptography. A survey on bent functions can be found in \cite{Car-Mes16}, as well as the book \cite{Mes16}.

The bent property of Boolean functions has been extended to general $(n,m)$-functions $F$ by requesting that all the nonzero linear combinations of its {\it coordinate functions} are bent functions. Such vectorial functions are called {\it vectorial bent}. They exist if and only if $n$ is even and $m\leq n/2$.
In the literature, methods to construct new vectorial bent (plateaued) functions are divided into two classes: those building functions from scratch are called {\it primary};  those using known vectorial bent functions are called {\it secondary}. For primary constructions, Nyberg firstly presented the constructions of vectorial bent functions based on some special classes of bent functions such as the Maiorana-McFarland class ($\mathcal{MM}$), the Dillon's partial spread class ($\mathcal{PS}$) and class $H$ \cite{Nyb} (this class has been generalized to class $\mathcal{H}$ by Carlet and Mesnager, see \cite{Car-Mes11}). Satoh et al. modified the first method in \cite{Nyb} such that the constructed functions achieve the largest algebraic degree. Pasalic and Zhang studied vectorial bent functions of the form  $F(x)={\rm Tr}^{n}_{m}(\lambda x^d)$ \cite{PZ12}. Dong et al. constructed three classes of vectorial bent $(2k,k)$-functions based on monomial bent functions and $\mathcal{PS}$ bent functions, see \cite{DZQF13}. Muratovi\'{c}-Ribi\'{c} et al. studied the vectorial bentness and hyperbentness of the trace functions ${\rm Tr}^{2k}_{k}(\sum\limits^{2^{k}}_{i=0}\alpha_{i}x^{i(2^{k}-1)})$, $\alpha_{i}\in \mathbb{F}_{2^{n}}$, see \cite{MPS, MPS14,PTK17}. Mesnager presented a generic construction of bent vectorial $(2k,k)$-functions of the form $xG(yx^{2^{k}-2})$, where $G$ is an oval polynomial on $\mathbb{F}_{2^{k}}$ \cite{Mes15}. Xu et al. gave a classification of vectorial bent monomials and some constructions of bent multinomials in \cite{Xu18}.

Compared with primary constructions, the results on secondary constructions of vectorial bent functions seem to be much fewer. Carlet and Mesnager proposed some new secondary constructions of vectorial bent functions with larger numbers of  variables \cite{CM10}. In \cite{Bu-Car11}, Budaghyan and Carlet showed that two {\it CCZ-equivalent} vectorial bent functions must be {\it EA-equivalent}, and hence have the same algebraic degree. Further, the authors gave a method to produce non-quadratic vectorial bent functions by applying CCZ-equivalence to non-bent vectorial functions which have some components of bent functions. Very recently, Pott et al. proved that an $(n,n)$-function can have at most $2^{n}-2^{\frac{n}{2}}$ bent components, and those possess the maximum number of components can produce new vectorial bent functions \cite{Pott18}. More precisely, let $n=2k$, for any $(n,n)$-function $G$, if ${\rm Tr}^{n}_{1}(\alpha G(x))$ is bent for any $\alpha\in\mathbb{F}_{2^{n}}\setminus\mathbb{F}_{2^{k}}$, then ${\rm Tr}^{n}_{k}(\alpha G(x))$ is vectorial bent for any $\alpha \in \mathbb{F}_{2^{n}}\setminus\mathbb{F}_{2^{k}}$. Based on this observation, they find an infinite class of (quadratic) bent $(n,\frac{n}{2})$-functions of the form ${\rm Tr}^{n}_{k}(\alpha x^{2^{i}}({x+x^{2^{k}}}))$, where $\alpha\in\mathbb{F}_{2^{n}}\setminus\mathbb{F}_{2^{k}}$. In \cite{ZP18}, the authors show that for an $(n,m)$-function $G$ with $m\geq \frac{n}{2}$, the maximal possible number of bent components is equal to $2^{m}-2^{m-\frac{n}{2}}$, and those with maximum number of bent components can also produce optimal vectorial bent functions. They found a generic  class of  bent $(n,\frac{n}{2})$-functions of the form ${\rm Tr}^{n}_{k}(\alpha x^{2^{i}}\pi({x+x^{2^{k}}}))$, where $\pi$ is a permutation over $\mathbb{F}_{2^{k}}$, and  $\alpha\in\mathbb{F}_{2^{n}}\setminus\mathbb{F}_{2^{k}}$. This is why in this article we concern not only vectorial bent functions but also vectorial functions with maximal number of bent components.

In this paper, however, we mainly focus on the secondary constructions of vectorial bent functions without increasing the number of variables and then try to utilize those resulting vectorial bent functions to generate vectorial (plateaued) functions with maximal number of bent components. Explicitly, for an even integer $n=2k$, and its positive divisor $m$ such that $m\leq n/2$, we consider vectorial $(n,m)$-functions of the form
\begin{eqnarray}\label{cons1}
H(x)=G(x)+g(x),
 \end{eqnarray} where $G(x)$ is a vectorial bent $(n,m)$-function, and $g(x)$ is a Boolean function over $\mathbb{F}_{2^{n}}$.

At first glace, it would appear that finding such functions $G(x)$, and $g(x)$ might be quite difficult. However,
this is in fact not the case. By observing recent nice works of Mesnager \cite{Mes14}, and Tang et al. \cite{Tang17} on constructing of bent Boolean functions, in this paper we firstly introduce a {\it property} ($\mathbf{P}_{\tau}$) concerning Boolean functions and then establish a link between this property and the condition of Construction 7 presented by Tang et al \cite{Tang17}. This powerful tool makes us efficiently find more bent functions and then we find (at least) three new infinite families of vectorial bent functions by choosing some specific classes of vectorial (bent) functions.
It turns out that for each class of the selected vectorial (bent) functions, there are many $g's$ satisfying the required conditions. This also makes it possible for us to further construct {\it vectorial (plateaued) } $(n,m+t)$-functions which have the maximal number of bent components in the sense of \cite{ZP18}, see also \cite{Pott18} for the special case of $(n,n)$-functions, where $t$ is a nonnegative integer.

The rest of the paper is organized as follows. In Section 2 some basic definitions are given. In Section 3, based on the works of Carlet and Mesnager, we introduce the definition of property $(\mathbf{P}_{\tau})$, and establish a link between  property $(\mathbf{P}_{\tau})$ and the condition of Construction 7 of \cite{Tang17}. In Section 4, we specify how to produce new vectorial bent (plateaued) functions of the form \eqref{cons1}. In Section 5, we show that the results obtained in Section 4 give rise to (at least) three new infinite families of vectorial bent functions, and vectorial plateaued functions which have the maximal bent number of bent components. Finally, we concludes this paper in Section 6.

\section{Preliminaries}

Let $\mathbb{F}_{2^{n}}$ be the finite field consisting of $2^{n}$ elements, then its multiplicative group, denoted by $\mathbb{F}^{\ast}_{2^{n}}$, is a cyclic group of order $2^{n}-1$. Throughout this paper, we always identify $\mathbb{F}_{2^{n}}$ with the vector space $\mathbb{F}^{n}_{2}$ over $\mathbb{F}_{2}$. Any function $F: \mathbb{F}_{2^{n}}\rightarrow\mathbb{F}_{2^{m}}$ is called an $(n,m)$-function. Usually $(n,1)$-functions are called Boolean functions in $n$ variables, the set of which is denoted by $\mathcal{B}_{n}$.

The trace function ${\rm Tr}^{n}_{m}: \mathbb{F}_{2^{n}}\rightarrow \mathbb{F}_{2^{m}}$, where $m~|~n$, is defined as
$${\rm Tr}^{n}_{m}(x)\!=\!x\!+\!x^{2^{m}}\!+\!x^{2^{2m}}\!+\!\cdots\!+\!x^{2^{(n/m\!-\!1)m}}, ~\forall ~x\in \mathbb{F}_{2^{n}}.$$ When $m=1$, it is also called the absolute trace. In this paper, $\langle,\rangle$ denotes the usual {\it inner product} in a vector space over $\mathbb{F}_{2}$. For any $\alpha=(\alpha_1,\ldots,\alpha_n), \beta=(\beta_1,\ldots,\beta_n)\in\mathbb{F}_2^{n}$, one has $\langle\alpha,\beta\rangle=\sum_{i=1}^n\alpha_i\beta_i$. While in the finite field $\mathbb{F}_{2^{n}}$, we take $\langle\alpha,\beta\rangle={\rm Tr}^{n}_{1}(\alpha\beta)$ for any $\alpha,\beta\in\mathbb{F}_{2^{n}}$.

 For any $(n,m)$-function $F=(f_1,\ldots,f_m)$, where $f_1,\ldots,f_m\in\mathcal{B}_{n}$, all the nonzero linear combinations of
 $f_i, 1\leq i\leq m$, are
called the components of $F$. When $F$ is viewed as a mapping from the finite field $\mathbb{F}_{2^{n}}$ to $\mathbb{F}_{2^{m}}$,
the components of $F$ can be represented as $F_{\lambda}(x)={\rm Tr}_1^m(\lambda F(x)), ~\lambda\in\mathbb{F}_{2^{m}}^*.$

A Boolean function $f\in\mathcal{B}_{n}$ can
be uniquely represented by a multivariate polynomial as
\begin{eqnarray}\label{boolean}
f(X_1,\ldots,X_n)\!=\!\sum_{I\subseteq
\{1,2,\ldots,n\}}a_I\prod_{i\in I}X_i,~a_{I}\in \mathbb{F}_{2}.
\end{eqnarray}
A polynomial in $\mathbb{F}_{2}[X_1, \ldots, X_n]$ of the form (\ref{boolean}) is called a {\it reduced polynomial}. The number of variables in the highest order term with nonzero coefficient of this polynomial is called the {\it algebraic degree} of $f$. While for a general $(n,m)$-function $F$, the highest algebraic degree of its coordinate functions is called the algebraic degree of $F$. The function $F$ is called {\it quadratic} if its algebraic degree is no more than 2.

The {\it Walsh transform} of a Boolean function $f\in \mathcal{B}_{n}$ at a point $a \in \mathbb{F}_{2^{n}}$ is defined by
$$W_{f}(a)\!=\!\sum\limits_{x\in \mathbb{F}_{2^{n}}}(\!-\!1)^{f(x)\!+\!{\rm Tr}^{n}_{1}(a x)}.$$

 The function $f$ is called bent if $|W_{f}(a)|=2^{\frac{n}{2}}$ for all $a \in \mathbb{F}_{2^{n}}$.
 It is well known that bent functions exist if and only if $n$ is even. When $f\in \mathcal{B}_{n}$ is bent, the Boolean function $f^{\ast}$ such that $W_{f}(\alpha)=2^{\frac{n}{2}}(-1)^{f^{\ast}(\alpha)}$ for any $\alpha\in \mathbb{F}_{2^{n}}$, is also bent and is called the dual of $f$.

A Boolean function $f$ is called {\it plateaued} if $W_f$ takes three values $\{0,\pm 2^{s}\}$ for some integer $n/2\leq s\leq n$. For the case of $n$ even, the function $f$ is called {\it semi-bent} if $W_f$ takes three values $\{0,\pm 2^{\frac{n}{2}+1}\}$.

The nonlinearity of an $(n,m)$-function $F$ and hereby its resistance to {\it linear cryptanalysis} \cite{Matsui} is measured through the {\it extended Walsh spectrum}
$$\{\ast |W_{F}(a,\gamma)|:\gamma \in \mathbb{F}^{m}_{2}\backslash\{0\}, a \in \mathbb{F}_{2^{n}} \ast\},$$
where
$$W_{F}(a,\gamma)\!=\!\sum\limits_{x\in  \mathbb{F}^{n}_{2}}(\!-\!1)^{\langle \gamma,F(x)\rangle\! +\!\langle a,x\rangle }.$$
The function $F$ is said to be a {\it vectorial~bent~function} of dimension $m$ if all the components of $F$ are bent. In other words, $F$ is vectorial bent if and only if $|W_{F}(a,\gamma)|=2^{n/2}$, for any $\gamma \in \mathbb{F}^{m}_{2}\backslash\{0\}$ and for any $a\in\mathbb{F}^{n}_{2}$. $F$ is said to be a {\it plateaued vectorial function} if all the components are plateaued Boolean functions.

Two $(n,m)$-functions $F$ and $G$ are called {\it extended affine equivalent} (EA-equivalent) if there exist some affine permutation $L_1$ over $\mathbb{F}_{2^n}$ and some affine permutation $L_2$ over $\mathbb{F}_{2^m}$, and some affine function $A$ such that $F=L_2\circ G\circ L_1+A$. They are called {\it Carlet-Charpin-Zinoviev equivalent} (CCZ-equivalent) if there exists some affine automorphism $L=(L_1, L_2)$ of $\mathbb{F}_{2^n}\times \mathbb{F}_{2^m}$, where $L_1: \mathbb{F}_{2^n}\times \mathbb{F}_{2^m}\rightarrow \mathbb{F}_{2^n}$ and $L_2: \mathbb{F}_{2^n}\times \mathbb{F}_{2^m}\rightarrow \mathbb{F}_{2^m}$ are affine functions, such that $y=G(x)$ if and only if $L_2(x, y)=F\circ L_1(x, y)$. It is well known that EA-equivalence is a special kind of CCZ-equivalence, and that  CCZ-equivalence preserves the extended Walsh spectrum and the differential spectrum (but not for algebraic degree) \cite{CCZ}.

\section{The introduction of property $(\mathbf{P}_{\tau})$}

Throughout this section, let $n,\tau$ be two positive integers. To produce vectorial bent functions of the form \eqref{cons1} is to find suitable functions $G(x)$ and $g(x)$. To this end, in this section we introduce the property $(\mathbf{P}_{\tau})$ and establish a link between this property and the condition of Construction 7 presented by Tang et al. in \cite{Tang17}. This tool will make us effectively find new infinite families of vectorial bent functions which will be presented in the following two sections.\\

\noindent {\bf A. Carlet-Mesnager's criterion}\\

In this subsection, we firstly recall some known results which are the motivation for introducing property ($\mathbf{P}_{\tau}$) and present some new results concerning bent functions. The following construction is due to Carlet which can generate new bent functions \cite[Theorem 3]{Carlet06} and new plateaued functions \cite[Proposition 2]{Carlet15}.

\begin{lem}\label{carlet-lemma} (\cite[Lemma~1]{Carlet06}) Let $n$ be a positive integer. Let $f_{1},f_{2}, f_{3}, f_{4}\in \mathcal{B}_{n}$ be Boolean functions such that $f_{1}+f_{2}+f_{3}+f_{4}=0$. Let $\sigma: \mathbb{F}_{2^{n}}\rightarrow\mathbb{F}_{2}$ be defined as  $\sigma(x)=f_{1}(x)f_{2}(x)+f_{1}(x)f_{3}(x)+f_{2}(x)f_{3}(x)$. Then for each $a\in \mathbb{F}_{2^{n}}$,
\begin{center}
$W_{\sigma}(a)=\frac{1}{2}(W_{f_{1}}(a)+W_{f_{2}}(a)+W_{f_{3}}(a)-W_{f_{4}}(a)).$
\end{center}
\end{lem}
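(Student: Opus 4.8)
The plan is to reduce the whole statement to one pointwise identity between $\{\pm1\}$-valued quantities and then integrate that identity against the additive characters $x\mapsto(-1)^{{\rm Tr}^{n}_{1}(ax)}$. First I would translate the hypothesis $f_{1}+f_{2}+f_{3}+f_{4}=0$, which is an identity of Boolean functions (addition over $\mathbb{F}_{2}$), into multiplicative form. Since it gives $f_{4}=f_{1}+f_{2}+f_{3}$, for every $x\in\mathbb{F}_{2^{n}}$ we obtain $(-1)^{f_{4}(x)}=(-1)^{f_{1}(x)}(-1)^{f_{2}(x)}(-1)^{f_{3}(x)}$. Writing $\epsilon_{i}=(-1)^{f_{i}(x)}\in\{\pm1\}$, the fourth sign is thereby forced to equal $\epsilon_{4}=\epsilon_{1}\epsilon_{2}\epsilon_{3}$, so only three independent signs remain.

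The heart of the argument is the pointwise claim that, for every $x$,
$$2(-1)^{\sigma(x)}=(-1)^{f_{1}(x)}+(-1)^{f_{2}(x)}+(-1)^{f_{3}(x)}-(-1)^{f_{4}(x)}.$$
I would prove this by a short, symmetric case analysis. Both $\sigma=f_{1}f_{2}+f_{1}f_{3}+f_{2}f_{3}$ (the degree-$2$ elementary symmetric polynomial in $f_{1},f_{2},f_{3}$) and the right-hand expression $\epsilon_{1}+\epsilon_{2}+\epsilon_{3}-\epsilon_{1}\epsilon_{2}\epsilon_{3}$ are invariant under permutations of $(f_{1},f_{2},f_{3})$, so it suffices to split according to how many of $f_{1}(x),f_{2}(x),f_{3}(x)$ equal $1$. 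If zero or one of them is $1$, then $\sigma(x)=0$, the left side is $+2$, and a direct check of the right side (using $\epsilon_{4}=\epsilon_{1}\epsilon_{2}\epsilon_{3}$) also gives $+2$; if two or three of them are $1$, then $\sigma(x)=1$, both sides equal $-2$. This exhausts the four cases and establishes the identity.

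Finally I would multiply the pointwise identity by $(-1)^{{\rm Tr}^{n}_{1}(ax)}$ and sum over all $x\in\mathbb{F}_{2^{n}}$. The left side becomes $2\sum_{x}(-1)^{\sigma(x)+{\rm Tr}^{n}_{1}(ax)}=2W_{\sigma}(a)$, while on the right each term $\sum_{x}(-1)^{f_{i}(x)+{\rm Tr}^{n}_{1}(ax)}$ is exactly $W_{f_{i}}(a)$. Dividing by $2$ yields $W_{\sigma}(a)=\tfrac{1}{2}\bigl(W_{f_{1}}(a)+W_{f_{2}}(a)+W_{f_{3}}(a)-W_{f_{4}}(a)\bigr)$, as required.

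The only genuine obstacle is the pointwise identity; once it is in hand, the passage to Walsh transforms is a purely linear summation over the characters. That identity is elementary—it is a finite verification on the sign cube—but it carries the entire content of the lemma, so I would state it as a separate displayed claim and verify it carefully before summing.
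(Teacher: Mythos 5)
Your proof is correct: the pointwise identity $2(-1)^{\sigma(x)}=(-1)^{f_{1}(x)}+(-1)^{f_{2}(x)}+(-1)^{f_{3}(x)}-(-1)^{f_{4}(x)}$ checks out in all four symmetric cases, and summing against $(-1)^{{\rm Tr}^{n}_{1}(ax)}$ then gives the stated Walsh-transform relation. Note that the paper itself offers no proof of this lemma --- it is quoted verbatim as Lemma~1 of Carlet's 2006 paper --- and your argument is precisely the standard one used there, so there is nothing further to compare.
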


Let $n$ be an even integer. With the notations as above, based on a work of Carlet~(\cite[Theorem 3]{Carlet06}),  Mesnager has shown that if $f_{i}$ is bent for $i=1,2,3,4$, then  $\sigma$ is bent if and only if $f^{\ast}_{1}+f^{\ast}_{2}+f^{\ast}_{3}+f^{\ast}_{4}=0$; and if $\sigma$ is bent, then $\sigma^{\ast}=f^{\ast}_{1}f^{\ast}_{2}+f^{\ast}_{1}f^{\ast}_{3}+f^{\ast}_{2}f^{\ast}_{3}$, see \cite[Theorem~4]{Mes14}. Under the assumptions as in Lemma \ref{carlet-lemma}, we call this method of estimating whether $\sigma$ is bent or not \emph{Carlet-Mesnager's criterion}. We need the proof of this theorem and let us recall it as follows~(there are some improvements).
\begin{thm}\label{Carlet-Mesnager}(Carlet-Mesnager's criterion) Let $n=2k$ be a positive integer. Let $f_{1},f_{2}, f_{3}\in \mathcal{B}_{n}$ be three pairwise distinct bent functions such that $f_{4}=f_{1}+f_{2}+f_{3}$ is also a bent function. Let $\sigma=f_{1}f_{2}+f_{1}f_{3}+f_{2}f_{3}$.
Then $\sigma$ is bent if and only if $f^{\ast}_{1}+f^{\ast}_{2}+f^{\ast}_{3}+f^{\ast}_{4}=0$. Moreover, if $\sigma$ is bent, then $\sigma^{\ast}=f^{\ast}_{1}f^{\ast}_{2}+f^{\ast}_{1}f^{\ast}_{3}+f^{\ast}_{2}f^{\ast}_{3}$.
\end{thm}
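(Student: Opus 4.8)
The plan is to build everything on top of Lemma~\ref{carlet-lemma}, which already gives the Walsh transform of $\sigma$ in terms of the Walsh transforms of $f_1,f_2,f_3,f_4$. Since each $f_i$ is bent, we may write $W_{f_i}(a)=2^{k}(-1)^{f_i^{\ast}(a)}$ for $i=1,2,3,4$ and every $a\in\mathbb{F}_{2^n}$. Substituting this into the identity of Lemma~\ref{carlet-lemma} yields
\begin{eqnarray*}
W_{\sigma}(a)=2^{k-1}\left((-1)^{f_1^{\ast}(a)}+(-1)^{f_2^{\ast}(a)}+(-1)^{f_3^{\ast}(a)}-(-1)^{f_4^{\ast}(a)}\right).
\end{eqnarray*}
So the whole problem reduces to analyzing, for each fixed $a$, a sum of three $\pm1$ terms minus a fourth $\pm1$ term, i.e.\ an expression of the form $\varepsilon_1+\varepsilon_2+\varepsilon_3-\varepsilon_4$ with each $\varepsilon_i\in\{\pm1\}$.

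Next I would carry out the elementary case analysis on the four sign bits $b_i:=f_i^{\ast}(a)$. The sum $\varepsilon_1+\varepsilon_2+\varepsilon_3-\varepsilon_4$ is an even integer lying in $\{-4,-2,0,2,4\}$, and $\sigma$ is bent at $a$ precisely when $|W_{\sigma}(a)|=2^k$, that is, when this bracketed quantity equals $\pm2$. The key observation is that the bracket equals $\pm2$ for \emph{every} $a$ if and only if one never lands in the values $0$ or $\pm4$. A short enumeration over the $16$ sign patterns shows that the bracket takes a value in $\{-4,0,4\}$ exactly when $b_1+b_2+b_3+b_4\equiv 0\pmod 2$ fails in the right way; more precisely, $|\varepsilon_1+\varepsilon_2+\varepsilon_3-\varepsilon_4|=2$ holds iff $b_1+b_2+b_3+b_4=0$ in $\mathbb{F}_2$. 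Requiring this for all $a$ is exactly the condition $f_1^{\ast}+f_2^{\ast}+f_3^{\ast}+f_4^{\ast}=0$, which gives the stated equivalence. (The hypothesis that $f_1,f_2,f_3$ are pairwise distinct and $f_4=f_1+f_2+f_3$ guarantees $\sigma$ is genuinely a sum-of-products and that $f_1+f_2+f_3+f_4=0$, so Lemma~\ref{carlet-lemma} applies.)

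For the ``moreover'' part, I would assume $\sigma$ is bent and read off the dual directly from the sign computation. When $f_1^{\ast}+f_2^{\ast}+f_3^{\ast}+f_4^{\ast}=0$ we have $W_{\sigma}(a)=\pm2^{k}$, and by definition $\sigma^{\ast}(a)$ is the bit recording this sign: $W_{\sigma}(a)=2^{k}(-1)^{\sigma^{\ast}(a)}$. Comparing the sign of the bracket $\varepsilon_1+\varepsilon_2+\varepsilon_3-\varepsilon_4$ against the claimed formula reduces to checking, over the admissible sign patterns (those with $b_1+b_2+b_3+b_4=0$), that the sign of the bracket agrees with $(-1)^{b_1b_2+b_1b_3+b_2b_3}$. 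This is the same algebraic identity as in Lemma~\ref{carlet-lemma} applied one level up, now with the roles of the $f_i$ played by their duals, and it is verified by the same finite check on the bit patterns.

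I expect the only delicate point to be bookkeeping in the sign enumeration: one must be careful that the asymmetric minus sign on $\varepsilon_4$ interacts correctly with the symmetric combinatorial condition $b_1+b_2+b_3+b_4=0$, and that the dual formula $\sigma^{\ast}=f_1^{\ast}f_2^{\ast}+f_1^{\ast}f_3^{\ast}+f_2^{\ast}f_3^{\ast}$ matches the sign on precisely the patterns that survive. Everything else is a direct substitution into Lemma~\ref{carlet-lemma} plus the bent-ness normalization $W_{f_i}(a)=2^{k}(-1)^{f_i^{\ast}(a)}$; there is no analytic estimate involved, only a finite verification over $\{\pm1\}^4$.
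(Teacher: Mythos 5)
Your proposal is correct and follows essentially the same route as the paper: both start from Lemma~\ref{carlet-lemma}, substitute $W_{f_i}(a)=2^{k}(-1)^{f_i^{\ast}(a)}$, and reduce bentness of $\sigma$ to the parity condition $f_1^{\ast}(a)+f_2^{\ast}(a)+f_3^{\ast}(a)+f_4^{\ast}(a)\equiv 0\pmod 2$ for all $a$ (the paper phrases this as a congruence of $W_{\sigma}(a)$ modulo $2^{k+1}$ rather than your explicit enumeration of the sixteen sign patterns, but the content is identical). The only minor difference is that for the dual formula the paper simply cites Theorem~3 of Carlet's work, whereas you verify it by a self-contained finite check on the admissible sign patterns, which indeed goes through.
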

\begin{proof}
 By Lemma \ref{carlet-lemma}, for each $a\in \mathbb{F}_{2^{n}}$, we have
\begin{eqnarray*} W_{\sigma}(a) &\!=\!& 2^{k-1}((-1)^{f_{1}^{\ast}(a)}+(-1)^{f_{2}^{\ast}(a)}+(-1)^{f_{3}^{\ast}(a)}-(-1)^{f_{4}^{\ast}(a)})\\
             &\!=\!& 2^{k}(f_{1}^{\ast}(a)+f_{2}^{\ast}(a)+f_{3}^{\ast}(a)+f_{4}^{\ast}(a)+1)~({\rm mod}~2^{k+1}).
             \end{eqnarray*}
Recall that for a Boolean function $g$, $W_{g}(a)=\pm 2^{k}$ if and only if $W_{g}(a)=2^{k}~({\rm mod}~2^{k+1})$. Then
  \begin{eqnarray}\label{0-pla} W_{\sigma}(a)=\pm 2^{k} \text{~if and only if~} f^{\ast}_{1}(a)+f^{\ast}_{2}(a)+f^{\ast}_{3}(a)+f^{\ast}_{4}(a)\equiv0~({\rm mod}~2),
\end{eqnarray}
and one has $\sigma$ is bent if and only if $f^{\ast}_{1}+f^{\ast}_{2}+f^{\ast}_{3}+f^{\ast}_{4}=0$; and the second assertion follows from Theorem 3 of \cite{Carlet06}.
\end{proof}

 In fact, using some arguments of the proof discussed above, we can obtain the following result.

\begin{thm}\label{startingpoint} With the same notations as in Theorem \ref{Carlet-Mesnager}. Then the following three assertions hold:

1) $\sigma$ is bent if and only if $f^{\ast}_{1}+f^{\ast}_{2}+f^{\ast}_{3}+f^{\ast}_{4}=0$; and if $\sigma$ is bent, then $\sigma^{\ast}=f^{\ast}_{1}f^{\ast}_{2}+f^{\ast}_{1}f^{\ast}_{3}+f^{\ast}_{2}f^{\ast}_{3}$;

2) $\sigma$ is semi-bent if and only if $f^{\ast}_{1}+f^{\ast}_{2}+f^{\ast}_{3}+f^{\ast}_{4}=1$;

3) Otherwise, $\sigma$ is a Boolean function satisfying $\{|W_{\sigma}(\lambda)|~|~\lambda\in \mathbb{F}_{2^{n}}\}=\{0,2^{k},2^{k+1}\}$.
\end{thm}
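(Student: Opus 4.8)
The plan is to push the sign analysis already initiated in the proof of Theorem \ref{Carlet-Mesnager} one step further, so that it simultaneously controls the magnitude of the Walsh coefficient and the relevant parity. The only input I need is the identity derived there (from Lemma \ref{carlet-lemma}),
$$W_{\sigma}(a)=2^{k-1}\bigl((-1)^{f_{1}^{\ast}(a)}+(-1)^{f_{2}^{\ast}(a)}+(-1)^{f_{3}^{\ast}(a)}-(-1)^{f_{4}^{\ast}(a)}\bigr).$$
Setting $\epsilon_{i}=(-1)^{f_{i}^{\ast}(a)}\in\{\pm1\}$ and $S(a)=\epsilon_{1}+\epsilon_{2}+\epsilon_{3}-\epsilon_{4}$, so that $W_{\sigma}(a)=2^{k-1}S(a)$, I would first run through the finitely many sign patterns of $(\epsilon_{1},\epsilon_{2},\epsilon_{3},\epsilon_{4})$. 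This bookkeeping shows $S(a)\in\{0,\pm2,\pm4\}$, whence $|W_{\sigma}(a)|\in\{0,2^{k},2^{k+1}\}$ for every $a$, and moreover refines \eqref{0-pla}: writing $p(a)=f_{1}^{\ast}(a)+f_{2}^{\ast}(a)+f_{3}^{\ast}(a)+f_{4}^{\ast}(a)\bmod 2$, one gets $|W_{\sigma}(a)|=2^{k}$ exactly when $p(a)=0$, while $p(a)=1$ forces $|W_{\sigma}(a)|\in\{0,2^{k+1}\}$ (the value being separated by the sign of $\epsilon_{4}$, i.e.\ by whether an even or odd number of the three summands $\epsilon_{1},\epsilon_{2},\epsilon_{3}$ align with $\epsilon_{4}$).

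With this dictionary between $p(a)$ and $|W_{\sigma}(a)|$ in hand, assertion 1) is just the global statement that $|W_{\sigma}(a)|=2^{k}$ for all $a$ iff $p\equiv0$ iff $f_{1}^{\ast}+f_{2}^{\ast}+f_{3}^{\ast}+f_{4}^{\ast}=0$, together with the dual formula already recorded in Theorem \ref{Carlet-Mesnager}. For assertion 2), I would observe that $\sigma$ is semi-bent precisely when $W_{\sigma}(a)\in\{0,\pm2^{k+1}\}$ for all $a$, which by the refined analysis happens exactly when $p(a)=1$ for every $a$, i.e.\ when $f_{1}^{\ast}+f_{2}^{\ast}+f_{3}^{\ast}+f_{4}^{\ast}=1$. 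Here I would add a one-line Parseval remark to confirm that this spectrum is genuinely the semi-bent one: $\sum_{a}W_{\sigma}(a)^{2}=2^{2n}$ rules out the degenerate possibilities that only $0$ or only $\pm2^{k+1}$ occurs, so both values are attained.

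For assertion 3) the hypothesis is that $g:=f_{1}^{\ast}+f_{2}^{\ast}+f_{3}^{\ast}+f_{4}^{\ast}$ is neither $0$ nor $1$, hence nonconstant. Let $n_{0},n_{1},n_{2}$ count the $a$ with $|W_{\sigma}(a)|$ equal to $0$, $2^{k}$, $2^{k+1}$ respectively. The counting identity $n_{0}+n_{1}+n_{2}=2^{n}$ and Parseval $n_{1}2^{2k}+n_{2}2^{2k+2}=2^{4k}$, i.e.\ $n_{1}+4n_{2}=2^{2k}$, subtract to give $n_{0}=3n_{2}$. Since $g\not\equiv1$ there is an $a$ with $p(a)=0$, so $n_{1}>0$ and $2^{k}$ occurs; since $\sigma$ is not bent, $n_{2}=0$ would force $n_{0}=0$ and $n_{1}=2^{2k}$, a contradiction, so $n_{2}>0$ and $2^{k+1}$ occurs; then $n_{0}=3n_{2}>0$ so $0$ occurs as well. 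Thus $\{|W_{\sigma}(\lambda)|:\lambda\in\mathbb{F}_{2^{n}}\}=\{0,2^{k},2^{k+1}\}$.

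The routine part is the exhaustive sign table of paragraph one; the only genuinely load-bearing steps are the refinement of \eqref{0-pla} that separates the case $p(a)=1$ into the two magnitudes $0$ and $2^{k+1}$, and the Parseval count $n_{0}=3n_{2}$ in assertion 3), which is exactly what guarantees that in the nonconstant case all three magnitudes are truly realized rather than merely two. I expect the sign separation to be the main conceptual point, since everything downstream is then a clean consequence of Parseval and the two-way classification by the parity $p(a)$.
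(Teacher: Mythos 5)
Your proposal is correct and follows essentially the same route as the paper: both start from the identity $W_{\sigma}(a)=2^{k-1}\bigl((-1)^{f_{1}^{\ast}(a)}+(-1)^{f_{2}^{\ast}(a)}+(-1)^{f_{3}^{\ast}(a)}-(-1)^{f_{4}^{\ast}(a)}\bigr)$, split according to the parity of $f_{1}^{\ast}(a)+f_{2}^{\ast}(a)+f_{3}^{\ast}(a)+f_{4}^{\ast}(a)$ to get the two equivalences \eqref{0-pla} and \eqref{1-pla}, and invoke Parseval for assertion 3. The only difference is cosmetic (an exhaustive sign table versus the paper's computation with $t_{b}$), and your explicit Parseval count $n_{0}=3n_{2}$ usefully fills in the detail the paper leaves to the reader in assertion 3.
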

\begin{proof} One has seen that the first assertion holds true. Now if there exists an elment $b\in \mathbb{F}_{2^{n}}$ such that $f^{\ast}_{1}(b)+f^{\ast}_{2}(b)+f^{\ast}_{3}(b)+f^{\ast}_{4}(b)=1$ (addition modulo 2), set $t_{b}:=f^{\ast}_{1}(b)+f^{\ast}_{2}(b)+f^{\ast}_{3}(b)+f^{\ast}_{4}(b)$, then $t_{b}\in \{1,3\}$ (recall here that these sums are calculated in $\mathbb{Z}$). We have  \begin{eqnarray*} W_{\sigma}(b) &\!=\!& 2^{k-1}((-1)^{f_{1}^{\ast}(b)}+(-1)^{f_{2}^{\ast}(b)}+(-1)^{f_{3}^{\ast}(b)}-(-1)^{f_{4}^{\ast}(b)})\\
             &\!=\!& 2^{k}(1-(f^{\ast}_{1}(b)+f^{\ast}_{2}(b)+f^{\ast}_{3}(b)-f^{\ast}_{4}(b)))\\
             &\!=\!& 2^{k}(2f^{\ast}_{4}(b)+1-t_{b}).
\end{eqnarray*}

\noindent Then \begin{eqnarray*}  W_{\sigma}(b)=
\begin{cases}
2^{k+1}f_{4}^{\ast}(b), ~~~~~~~~~~~~~if~t_{b}=1,\\
2^{k+1}(f_{4}^{\ast}(b)-1), ~~~~~if~t_{b}=3,
\end{cases}
\end{eqnarray*}
which means that $W_{\sigma}(b)\in \{0,\pm 2^{k+1}\}$. Then we have that if  $f^{\ast}_{1}+f^{\ast}_{2}+f^{\ast}_{3}+f^{\ast}_{4}=1$, then $\sigma$ is semi-bent. It needs to show that the converse also holds true. To the contrary, if there exists an element $a\in \mathbb{F}_{2^{n}}$ such that $f^{\ast}_{1}(a)+f^{\ast}_{2}(a)+f^{\ast}_{3}(a)+f^{\ast}_{4}(a)=0$, then by (\ref{0-pla}), one has $W_{\sigma}(a)=\pm 2^{k}$, a contradiction with the assumption that $\sigma$ is semi-bent !  Thus the assertion 2) holds true.

 We in fact have proved that for any $a \in \mathbb{F}_{2^{n}}$, \begin{eqnarray}\label{1-pla} W_{\sigma}(a)\in \{0, \pm 2^{k+1}\} \text{~if and only if~} f^{\ast}_{1}(a)+f^{\ast}_{2}(a)+f^{\ast}_{3}(a)+f^{\ast}_{4}(a)\equiv 1~({\rm mod}~2).
\end{eqnarray}
Now by Parseval's relation and (\ref{0-pla}), (\ref{1-pla}), one can obtain the assertion 3). We complete the proof. \end{proof}

Recall that the first-order derivative of an $(n,m)$-function $F$ is defined as $D_{a}F(x):=F(x)+F(x+a)$, and the second-order derivative of $F$ with respect to $(a,b)$ is defined as $D_{a}D_{b}F(x):=F(x)+F(x+a)+F(x+b)+F(x+a+b)$, where $a, b\in \mathbb{F}_{2^{n}}$.  Let $f(x)\in \mathcal{B}_{n}$ be any bent function, and $a_{1}, a_{2}, a_{3}$ be any three elements in $\mathbb{F}_{2^{n}}$. Let $f_{i}(x)=f(x)+{\rm Tr}^{n}_{1}(a_{i}x)$, then
\begin{eqnarray}\label{cal-dual}
f_{i} \text{~is bent with the dual function~} f_{i}^{\ast}(x)=f^{\ast}(x+a_{i}),
\end{eqnarray}
for $i=1,2,3$, see \cite{Cant03}. Mesnager has showed in \cite{Mes14} that $f^{\ast}_{1}+f^{\ast}_{2}+f^{\ast}_{3}+f^{\ast}_{4}$ $=D_{a_{1}+a_{2}}D_{a_{1}+a_{3}}f^{\ast}$, and  \begin{eqnarray*}\sigma&\!=\!&f_{1}f_{2}+f_{1}f_{3}+f_{2}f_{3} \\
             &\!=\!& f(x)\!+\!{\rm Tr}^{n}_{1}(a_{1}x){\rm Tr}^{n}_{1}(a_{2}x)\!+\!{\rm Tr}^{n}_{1}(a_{1}x){\rm Tr}^{n}_{1}(a_{3}x)\!+\!{\rm Tr}^{n}_{1}(a_{2}x){\rm Tr}^{n}_{1}(a_{3}x).
             \end{eqnarray*}
Thus, by Theorem \ref{startingpoint}, we have $\sigma$ is bent if and only if $D_{a_{1}+a_{2}}D_{a_{1}+a_{3}}f^{\ast}=0$;  $\sigma$ is semi-bent if and only if $D_{a_{1}+a_{2}}D_{a_{1}+a_{3}}f^{\ast}=1$.

\begin{prop}\label{mes-critical-point} Let $n=2k$ be a positive integer. Let $f(x)\in \mathcal{B}_{n}$ be any bent function, and $a_{1}, a_{2}, a_{3}$ be any three elements in $\mathbb{F}_{2^{n}}$. Let $f_{i}(x)=f(x)+{\rm Tr}^{n}_{1}(a_{i}x)$, $i=1,2,3$. Then $\sigma=f_{1}f_{2}+f_{1}f_{3}+f_{2}f_{3}$$=f(x)\!+\!{\rm Tr}^{n}_{1}(a_{1}x){\rm Tr}^{n}_{1}(a_{2}x)\!+\!{\rm Tr}^{n}_{1}(a_{1}x){\rm Tr}^{n}_{1}(a_{3}x)\!+\!{\rm Tr}^{n}_{1}(a_{2}x){\rm Tr}^{n}_{1}(a_{3}x)$ is bent if and only if $D_{a_{1}+a_{2}}D_{a_{1}+a_{3}}f^{\ast}=0$; $\sigma$ is semi-bent if and only if $D_{a_{1}+a_{2}}D_{a_{1}+a_{3}}f^{\ast}=1$.  If $\sigma$ is bent, then
\begin{eqnarray}\label{6-dual}
\sigma^{\ast}(x)=f^{\ast}(x+a_{1})f^{\ast}(x+a_{2})+f^{\ast}(x+a_{1})f^{\ast}(x+a_{3})+f^{\ast}(x+a_{2})f^{\ast}(x+a_{3}).
\end{eqnarray}
\end{prop}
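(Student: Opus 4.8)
The plan is to recognize this Proposition as a direct specialization of Theorem \ref{startingpoint}, obtained by feeding in the particular family $f_{i}(x)=f(x)+{\rm Tr}^{n}_{1}(a_{i}x)$ and then translating the abstract dual conditions of that theorem into statements about a second-order derivative of $f^{\ast}$. First I would record the explicit shape of $\sigma$. Writing $L_{i}(x)={\rm Tr}^{n}_{1}(a_{i}x)$ and using $f^{2}=f$ in $\mathcal{B}_{n}$, a direct expansion of $f_{i}f_{j}=(f+L_{i})(f+L_{j})$ summed over the three pairs $\{i,j\}\subseteq\{1,2,3\}$ collapses the three copies of $f$ into one (since $3f=f$ mod $2$) and cancels the mixed terms $fL_{i}$ pairwise, leaving $\sigma=f+L_{1}L_{2}+L_{1}L_{3}+L_{2}L_{3}$, which is exactly the claimed form. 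This step is routine.

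Next I would verify the hypotheses of Theorem \ref{startingpoint}. By \eqref{cal-dual}, each $f_{i}$ ($i=1,2,3$) is bent with dual $f_{i}^{\ast}(x)=f^{\ast}(x+a_{i})$; moreover $f_{4}=f_{1}+f_{2}+f_{3}=f+{\rm Tr}^{n}_{1}((a_{1}+a_{2}+a_{3})x)$ is again of this type, hence bent with $f_{4}^{\ast}(x)=f^{\ast}(x+a_{1}+a_{2}+a_{3})$. Assuming for the moment that $a_{1},a_{2},a_{3}$ are pairwise distinct, the $f_{i}$ are pairwise distinct and Theorem \ref{startingpoint} applies verbatim.

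The crux is to transform the dual conditions into a derivative condition. I would compute the dual sum and match it with a second-order derivative: substituting the four duals gives $f_{1}^{\ast}(x)+f_{2}^{\ast}(x)+f_{3}^{\ast}(x)+f_{4}^{\ast}(x)=f^{\ast}(x+a_{1})+f^{\ast}(x+a_{2})+f^{\ast}(x+a_{3})+f^{\ast}(x+a_{1}+a_{2}+a_{3})$, and the substitution $x\mapsto x+a_{1}$ identifies this with $D_{a_{1}+a_{2}}D_{a_{1}+a_{3}}f^{\ast}$, the translation by $a_{1}$ being immaterial since a function vanishes identically (resp.\ equals the constant $1$) if and only if any of its translates does. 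With this identification, assertion 1) of Theorem \ref{startingpoint} yields ``$\sigma$ bent $\iff D_{a_{1}+a_{2}}D_{a_{1}+a_{3}}f^{\ast}=0$'' and assertion 2) yields ``$\sigma$ semi-bent $\iff D_{a_{1}+a_{2}}D_{a_{1}+a_{3}}f^{\ast}=1$''. For the dual, when $\sigma$ is bent Theorem \ref{startingpoint} gives $\sigma^{\ast}=f_{1}^{\ast}f_{2}^{\ast}+f_{1}^{\ast}f_{3}^{\ast}+f_{2}^{\ast}f_{3}^{\ast}$, and inserting $f_{i}^{\ast}(x)=f^{\ast}(x+a_{i})$ produces exactly \eqref{6-dual}. (As a sanity check, the direction vectors $a_{1}+a_{2},a_{1}+a_{3},a_{2}+a_{3}$ are the three nonzero pairwise sums, so $D_{a_{1}+a_{2}}D_{a_{1}+a_{3}}f^{\ast}$ is symmetric in the $a_{i}$, as it must be.)

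The only genuinely delicate point is that the Proposition allows \emph{any} three elements, whereas Theorem \ref{startingpoint} requires the $f_{i}$ to be pairwise distinct; thus I would treat the coincident cases by hand. For instance, if $a_{1}=a_{2}$ then $f_{1}=f_{2}$ forces $\sigma=f_{1}f_{1}+f_{1}f_{3}+f_{1}f_{3}=f_{1}$, which is bent, while $D_{a_{1}+a_{2}}D_{a_{1}+a_{3}}f^{\ast}=D_{0}D_{a_{1}+a_{3}}f^{\ast}=0$, so the bent criterion holds; and formula \eqref{6-dual} specializes (using $f^{\ast}(x+a_{1})^{2}=f^{\ast}(x+a_{1})$ and the cancellation of the two equal cross terms) to $f^{\ast}(x+a_{1})=f_{1}^{\ast}$, matching $\sigma^{\ast}$. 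The all-equal case is handled identically. This bookkeeping for coincident points, together with confirming that the $a_{1}$-translate does not disturb the ``identically $0$ / identically $1$'' dichotomy, is where the care lies, but no hard estimate is involved.
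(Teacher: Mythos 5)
Your proposal is correct and follows essentially the same route as the paper: expand $\sigma$, invoke Theorem \ref{startingpoint}, use \eqref{cal-dual} to write $f_i^{\ast}(x)=f^{\ast}(x+a_i)$, and identify $f_1^{\ast}+f_2^{\ast}+f_3^{\ast}+f_4^{\ast}$ with (a translate of) $D_{a_1+a_2}D_{a_1+a_3}f^{\ast}$. Your explicit treatment of the coincident cases $a_i=a_j$ is a point of care the paper silently skips (its Theorem \ref{startingpoint} assumes the $f_i$ pairwise distinct while the Proposition allows arbitrary $a_i$), and your verification that these cases are consistent with both the bentness criterion and \eqref{6-dual} is correct.
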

\begin{proof} We need only to show the last assertion. However, this can be seen directly from that if $\sigma$ is bent, then by Theorem \ref{startingpoint}, $\sigma^{\ast}=f^{\ast}_{1}f^{\ast}_{2}+f^{\ast}_{1}f^{\ast}_{3}+f^{\ast}_{2}f^{\ast}_{3}$ and the fact (\ref{cal-dual}).
\end{proof}

With the notations as in the proposition above, let $a=a_{1}+a_{2}$, $b=a_{1}+a_{3}$. Then $\sigma$ is reduced to $f(x)+{\rm Tr}^{n}_{1}(ax){\rm Tr}^{n}_{1}(bx)+{\rm Tr}^{n}_{1}(a_{1}x)$. Let $h(x)=f(x)+{\rm Tr}^{n}_{1}(ax){\rm Tr}^{n}_{1}(bx)$. Then by Proposition \ref{mes-critical-point}, $h(x)$ is bent if and only if $D_{a}D_{b}f^{\ast}=0$; In this case,   $h^{\ast}(x)=f^{\ast}(x)f^{\ast}(x+a)+f^{\ast}(x)f^{\ast}(x+b)+f^{\ast}(x+a)f^{\ast}(x+b)$, see also \cite[Corollary 5]{Mes14}. And $h(x)$ is semi-bent if and only if $D_{a}D_{b}f^{\ast}=1$. In fact, we have proved the following corollary.

\begin{coro}\label{mes-starting-point} Let $n=2k$ be a positive integer. Let $f(x)\in \mathcal{B}_{n}$ be any bent function, and $a, b$ be any two elements in $\mathbb{F}_{2^{n}}$ with $a\neq b$. Then  $h(x)=f(x)+{\rm Tr}^{n}_{1}(ax){\rm Tr}^{n}_{1}(bx)$ is bent if and only if  $D_{a}D_{b}f^{\ast}=0$; $h(x)$ is semi-bent if and only if $D_{a}D_{b}f^{\ast}=1$. If $h(x)$ is bent, then $h^{\ast}(x)=f^{\ast}(x)f^{\ast}(x+a)+f^{\ast}(x)f^{\ast}(x+b)+f^{\ast}(x+a)f^{\ast}(x+b)$.
\end{coro}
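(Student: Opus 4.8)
The plan is to obtain this corollary as the single special case $a_{1}=0$, $a_{2}=a$, $a_{3}=b$ of Proposition \ref{mes-critical-point}, so that all the real work is reused rather than redone. First I would record the elementary $\mathbb{F}_{2}$-identity $uv+uw+vw=(u+v)(u+w)+u$, valid for any Boolean-valued $u,v,w$. Applying it with $u={\rm Tr}^{n}_{1}(a_{1}x)$, $v={\rm Tr}^{n}_{1}(a_{2}x)$, $w={\rm Tr}^{n}_{1}(a_{3}x)$ rewrites the symmetric combination inside $\sigma$ as ${\rm Tr}^{n}_{1}(ax){\rm Tr}^{n}_{1}(bx)+{\rm Tr}^{n}_{1}(a_{1}x)$ with $a=a_{1}+a_{2}$, $b=a_{1}+a_{3}$, which is exactly the reduction used just before the statement. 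Choosing $a_{1}=0$ (so $a_{2}=a$, $a_{3}=b$) annihilates the leftover linear term and yields $\sigma=h$ identically, with $f_{1}=f$, $f_{2}=f+{\rm Tr}^{n}_{1}(ax)$, $f_{3}=f+{\rm Tr}^{n}_{1}(bx)$.

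Then I would quote Proposition \ref{mes-critical-point} directly: $\sigma$ is bent iff $D_{a_{1}+a_{2}}D_{a_{1}+a_{3}}f^{\ast}=0$ and semi-bent iff this derivative equals $1$. Under the present choice of the $a_{i}$ this is precisely $D_{a}D_{b}f^{\ast}$, delivering both equivalences of the corollary simultaneously since $\sigma=h$. For the dual, I would substitute $a_{1}=0$, $a_{2}=a$, $a_{3}=b$ into \eqref{6-dual}: the shifts $x+a_{1},x+a_{2},x+a_{3}$ become $x,x+a,x+b$, and \eqref{6-dual} collapses to the claimed $h^{\ast}(x)=f^{\ast}(x)f^{\ast}(x+a)+f^{\ast}(x)f^{\ast}(x+b)+f^{\ast}(x+a)f^{\ast}(x+b)$.

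The one place demanding attention is that Proposition \ref{mes-critical-point} ultimately rests on Theorem \ref{Carlet-Mesnager}, whose proof presupposes that $f_{1},f_{2},f_{3}$ are pairwise distinct, i.e. $a\neq 0$, $b\neq 0$ and $a\neq b$, while the corollary hypothesizes only $a\neq b$. I would therefore settle the two degenerate cases $a=0$ and $b=0$ by hand. If $a=0$ then ${\rm Tr}^{n}_{1}(ax)\equiv 0$, so $h=f$ is bent, $D_{a}D_{b}f^{\ast}=D_{0}D_{b}f^{\ast}=0$, and the asserted dual formula degenerates to $h^{\ast}=f^{\ast}$ (using $(f^{\ast})^{2}=f^{\ast}$ and the cancellation of the two equal cross terms over $\mathbb{F}_{2}$); the case $b=0$ is symmetric. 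Combining these boundary verifications with the generic case finishes the proof. I anticipate no real obstacle: the substantive content is entirely front-loaded into Proposition \ref{mes-critical-point}, and this corollary is just its two-parameter repackaging, the sole subtlety being the bookkeeping over which degenerate translations the underlying theorem permits.
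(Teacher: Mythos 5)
Your proposal is correct and follows essentially the same route as the paper, which likewise obtains the corollary by rewriting the symmetric product in Proposition \ref{mes-critical-point} as ${\rm Tr}^{n}_{1}(ax){\rm Tr}^{n}_{1}(bx)$ plus a linear term via $a=a_{1}+a_{2}$, $b=a_{1}+a_{3}$, and then reading off both equivalences and the dual formula \eqref{6-dual}. Your explicit handling of the degenerate cases $a=0$ or $b=0$ (where $f_{1},f_{2},f_{3}$ fail to be pairwise distinct, so Theorem \ref{Carlet-Mesnager} does not literally apply) is a point of care the paper passes over silently.
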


Now, we want to see what will happen if there exist three pairwise distinct elements $a,b,c \in \mathbb{F}_{2^{n}}$ such that $D_{a}D_{b}f^{\ast}(x)=0$, $D_{a}D_{c}f^{\ast}(x)=0$, and $D_{b}D_{c}f^{\ast}(x)=0$. Let $f_{1}(x)=f(x)+{\rm Tr}^{n}_{1}(ax){\rm Tr}^{n}_{1}(bx)$, $f_{2}(x)=f(x)+{\rm Tr}^{n}_{1}(ax){\rm Tr}^{n}_{1}(cx)$, $f_{3}(x)=f(x)+{\rm Tr}^{n}_{1}(bx){\rm Tr}^{n}_{1}(cx)$. Then $f_{4}(x)=f_{1}(x)+f_{2}(x)+f_{3}(x)=f(x)+{\rm Tr}^{n}_{1}(ax){\rm Tr}^{n}_{1}(bx)+{\rm Tr}^{n}_{1}(ax){\rm Tr}^{n}_{1}(cx)+{\rm Tr}^{n}_{1}(ax){\rm Tr}^{n}_{1}(bx).$ By Proposition \ref{mes-critical-point}, $f_{4}(x)$ is bent if and only if $D_{a+b}D_{a+c}f^{\ast}(x)=0$. However, this is indeed the case by the assumptions and Lemma \ref{2-derivative-0} below. It means that $f_{i}$ is a bent function for $i=1,2,3,4$. Thus by Theorem \ref{startingpoint}, $\sigma(x)=f_{1}f_{2}+f_{1}f_{3}+f_{2}f_{3}=f(x)+{\rm Tr}^{n}_{1}(ax){\rm Tr}^{n}_{1}(bx){\rm Tr}^{n}_{1}(cx)$ is bent if and only if $f^{\ast}_{1}+ f^{\ast}_{2}+ f^{\ast}_{3}+ f^{\ast}_{4}=0$. We have to calculate the dual functions $f^{\ast}_{i}(x), i=1,2,3,4$. According to Corollary \ref{mes-starting-point}, we have
\begin{eqnarray*}\begin{cases}f_{1}^{\ast}(x)=f^{\ast}(x)f^{\ast}(x+a)+f^{\ast}(x)f^{\ast}(x+b)+f^{\ast}(x+a)f^{\ast}(x+b), \\ f_{2}^{\ast}(x)=f^{\ast}(x)f^{\ast}(x+a)+f^{\ast}(x)f^{\ast}(x+c)+f^{\ast}(x+a)f^{\ast}(x+c),\\
f_{3}^{\ast}(x)=f^{\ast}(x)f^{\ast}(x+b)+f^{\ast}(x)f^{\ast}(x+c)+f^{\ast}(x+b)f^{\ast}(x+c),
\end{cases} \end{eqnarray*} and hence
\begin{eqnarray*}f_{1}^{\ast}(x)+f_{2}^{\ast}(x)+f_{3}^{\ast}(x)=f^{\ast}(x+a)f^{\ast}(x+b)+f^{\ast}(x+a)f^{\ast}(x+c)+f^{\ast}(x+b)f^{\ast}(x+c).
 \end{eqnarray*} On the other hand, by  (\ref{6-dual}), we have $f^{\ast}_{4}(x)=f^{\ast}(x+a)f^{\ast}(x+b)+f^{\ast}(x+a)f^{\ast}(x+c)+f^{\ast}(x+b)f^{\ast}(x+c)$, that is, $f^{\ast}_{1}+ f^{\ast}_{2}+ f^{\ast}_{3}+ f^{\ast}_{4}=0$. Then $\sigma(x)=f(x)+{\rm Tr}^{n}_{1}(ax){\rm Tr}^{n}_{1}(bx){\rm Tr}^{n}_{1}(cx)$ is bent, and \begin{eqnarray*}\sigma^{\ast}(x)&=&f^{\ast}_{1}(x)f^{\ast}_{2}(x)+f^{\ast}_{1}(x)f^{\ast}_{3}(x)+f^{\ast}_{2}(x)f^{\ast}_{3}(x)\\
 &=&f^{\ast}(x)+g_{1}(x)g_{2}(x)g_{3}(x),
  \end{eqnarray*} where $g_{1}(x)=D_{a}f^{\ast}(x)$, $g_{2}(x)=D_{b}f^{\ast}(x)$, $g_{3}(x)=D_{c}f^{\ast}(x)$. By the arguments above and Corollary \ref{mes-starting-point},  we have the following results which can infer the main results of \cite{Xu17}.

\begin{coro}\label{cubic} Let $n=2k$ be a positive integer. Let $f(x)\in \mathcal{B}_{n}$ be any bent function, and $a, b, c$
be three pairwise distinct elements in $\mathbb{F}_{2^{n}}$.

1) If $D_{a}D_{b}f^{\ast}(x)=D_{a}D_{c}f^{\ast}(x)=D_{b}D_{c}f^{\ast}(x)=0$, then $\sigma(x)=f(x)+{\rm Tr}^{n}_{1}(ax){\rm Tr}^{n}_{1}(bx){\rm Tr}^{n}_{1}(cx)$ is a bent function with its dual $f^{\ast}(x)+g_{1}(x)g_{2}(x)g_{3}(x)$, where $g_{1}(x)=D_{a}f^{\ast}(x)$, $g_{2}(x)=D_{b}f^{\ast}(x)$, $g_{3}(x)=D_{c}f^{\ast}(x)$.

2) If $D_{a}D_{b}f^{\ast}(x)=0$, then $h(x)=f(x)+{\rm Tr}^{n}_{1}(ax){\rm Tr}^{n}_{1}(bx)$ is bent; if $D_{a}D_{b}f^{\ast}(x)=1$, then $h(x)$ is semi-bent; Otherwise, $h(x)$ is a function such that $\{|W_{h}(\nu)|~|~\nu\in \mathbb{F}_{2^{n}}\}=\{0,2^{k},2^{k+1}\}$.
\end{coro}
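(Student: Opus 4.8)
The plan is to dispatch the two assertions separately, deriving both from the apparatus already assembled, since the corollary is really a repackaging of Theorem \ref{startingpoint} together with Corollary \ref{mes-starting-point}. For assertion 2), Corollary \ref{mes-starting-point} already covers the bent case ($D_{a}D_{b}f^{\ast}=0$) and the semi-bent case ($D_{a}D_{b}f^{\ast}=1$), so only the remaining case is new. Recall from the discussion preceding Corollary \ref{mes-starting-point} that, choosing $a_{1},a_{2},a_{3}$ with $a=a_{1}+a_{2}$ and $b=a_{1}+a_{3}$, the function $\sigma$ of Proposition \ref{mes-critical-point} attached to the (automatically) bent functions $f_{i}=f+{\rm Tr}^{n}_{1}(a_{i}x)$ equals $h(x)+{\rm Tr}^{n}_{1}(a_{1}x)$, and that $f_{1}^{\ast}+f_{2}^{\ast}+f_{3}^{\ast}+f_{4}^{\ast}=D_{a}D_{b}f^{\ast}$. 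When $D_{a}D_{b}f^{\ast}$ is neither the constant $0$ nor the constant $1$, assertion 3) of Theorem \ref{startingpoint} gives $\{|W_{\sigma}(\lambda)|\}=\{0,2^{k},2^{k+1}\}$; since adding the linear term ${\rm Tr}^{n}_{1}(a_{1}x)$ only translates the argument of the Walsh transform, $W_{h}$ has the same multiset of absolute values, which is exactly the claim.

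For assertion 1), the strategy is to present $\sigma(x)=f(x)+{\rm Tr}^{n}_{1}(ax){\rm Tr}^{n}_{1}(bx){\rm Tr}^{n}_{1}(cx)$ as the second elementary symmetric function of four bent functions and then invoke Theorem \ref{startingpoint}. I would set $f_{1}=f+{\rm Tr}^{n}_{1}(ax){\rm Tr}^{n}_{1}(bx)$, $f_{2}=f+{\rm Tr}^{n}_{1}(ax){\rm Tr}^{n}_{1}(cx)$, $f_{3}=f+{\rm Tr}^{n}_{1}(bx){\rm Tr}^{n}_{1}(cx)$, so that $f_{4}:=f_{1}+f_{2}+f_{3}=f+{\rm Tr}^{n}_{1}(ax){\rm Tr}^{n}_{1}(bx)+{\rm Tr}^{n}_{1}(ax){\rm Tr}^{n}_{1}(cx)+{\rm Tr}^{n}_{1}(bx){\rm Tr}^{n}_{1}(cx)$. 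By Corollary \ref{mes-starting-point} the three hypotheses $D_{a}D_{b}f^{\ast}=D_{a}D_{c}f^{\ast}=D_{b}D_{c}f^{\ast}=0$ make $f_{1},f_{2},f_{3}$ bent, while $f_{4}$ has exactly the shape treated by Proposition \ref{mes-critical-point} (with $a_{1}=a$, $a_{2}=b$, $a_{3}=c$), hence is bent precisely when $D_{a+b}D_{a+c}f^{\ast}=0$. Expanding $f_{1}f_{2}+f_{1}f_{3}+f_{2}f_{3}$ and using ${\rm Tr}^{n}_{1}(\cdot)^{2}={\rm Tr}^{n}_{1}(\cdot)$ to collapse the cross-products, one checks the algebraic identity $f_{1}f_{2}+f_{1}f_{3}+f_{2}f_{3}=\sigma$.

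The crux — and the only genuinely new ingredient — is verifying $D_{a+b}D_{a+c}f^{\ast}=0$, that is, the bentness of $f_{4}$ (the fact to be recorded as Lemma \ref{2-derivative-0}). I would obtain it simply by summing the three hypotheses: writing each as $D_{u}D_{v}f^{\ast}(x)=f^{\ast}(x)+f^{\ast}(x+u)+f^{\ast}(x+v)+f^{\ast}(x+u+v)$ and adding over $(u,v)\in\{(a,b),(a,c),(b,c)\}$, the terms $f^{\ast}(x+a)$, $f^{\ast}(x+b)$, $f^{\ast}(x+c)$ each occur twice and cancel while $f^{\ast}(x)$ occurs three times, leaving $f^{\ast}(x)+f^{\ast}(x+a+b)+f^{\ast}(x+a+c)+f^{\ast}(x+b+c)$, which is exactly $D_{a+b}D_{a+c}f^{\ast}(x)$; hence the sum being $0$ forces $D_{a+b}D_{a+c}f^{\ast}=0$ and $f_{4}$ is bent. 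With all four $f_{i}$ bent, I would then compute the dual sum: Corollary \ref{mes-starting-point} gives $f_{i}^{\ast}=f^{\ast}+g_{j}g_{k}$ for the appropriate pairs among $g_{1}=D_{a}f^{\ast}$, $g_{2}=D_{b}f^{\ast}$, $g_{3}=D_{c}f^{\ast}$, and \eqref{6-dual} gives $f_{4}^{\ast}=f^{\ast}(x+a)f^{\ast}(x+b)+f^{\ast}(x+a)f^{\ast}(x+c)+f^{\ast}(x+b)f^{\ast}(x+c)$; adding the four duals, all terms cancel, so $f_{1}^{\ast}+f_{2}^{\ast}+f_{3}^{\ast}+f_{4}^{\ast}=0$ and Theorem \ref{startingpoint} yields that $\sigma$ is bent. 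Finally, applying the same symmetric-function collapse to $\sigma^{\ast}=f_{1}^{\ast}f_{2}^{\ast}+f_{1}^{\ast}f_{3}^{\ast}+f_{2}^{\ast}f_{3}^{\ast}$ with $f_{i}^{\ast}=f^{\ast}+g_{j}g_{k}$ and $g_{i}^{2}=g_{i}$ gives $\sigma^{\ast}=f^{\ast}+g_{1}g_{2}g_{3}$. The main obstacle is conceptual rather than computational: recognizing that the three pairwise second-derivative conditions are precisely what is needed both to force the fourth function $f_{4}$ to be bent and to make the dual sum vanish; once Lemma \ref{2-derivative-0} is in place, everything else is bookkeeping built on the earlier results.
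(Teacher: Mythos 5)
Your proposal is correct and follows essentially the same route as the paper: the same decomposition into $f_{1}=f+{\rm Tr}^{n}_{1}(ax){\rm Tr}^{n}_{1}(bx)$, $f_{2}$, $f_{3}$, $f_{4}$, the same appeal to Proposition \ref{mes-critical-point} for the bentness of $f_{4}$, the same dual computation via Corollary \ref{mes-starting-point} and \eqref{6-dual} to show $f_{1}^{\ast}+f_{2}^{\ast}+f_{3}^{\ast}+f_{4}^{\ast}=0$, and the same use of Theorem \ref{startingpoint} for both assertions. The only (immaterial) difference is that you verify $D_{a+b}D_{a+c}f^{\ast}=0$ by directly summing the three hypotheses, where the paper invokes the more general Lemma \ref{2-derivative-0}.
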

\begin{rmk}\label{reducepoly} With the same assumptions and notations as in the first assertion of corollary above and using Carlet-Mesnager's criterion, one can obtain the following interesting facts by selecting suitable bent functions $f_{1}, f_{2}, f_{3}$. Let $F(X_{1},X_{2},X_{3})$ be any reduced polynomials in $\mathbb{F}_{2}[X_{1},X_{2},X_{3}]$ (we send the readers to Section 2 concerning the definition of reduced polynomials), then $f(x)+F({\rm Tr}^{n}_{1}(ax),{\rm Tr}^{n}_{1}(bx), {\rm Tr}^{n}_{1}(cx))$ is bent with its dual $f^{\ast}(x)+F(g_{1}(x),g_{2}(x),g_{3}(x))$.
\end{rmk}

\noindent {\bf B. Property ($\mathbf{P}_{\tau}$) and equivalent conditions}\\

In this subsection, we introduce property ($\mathbf{P}_{\tau}$) concerning Boolean functions. Inspired by the observations made by Corollary \ref{cubic} and Remark \ref{reducepoly}, we want to consider more general cases. Explicitly, for a given Boolean function $g(x)\in \mathcal{B}_{n}$, we wonder to know what will happen if there exist $\tau$~($\tau \geq 2$) pairwise distinct elements $u_{i}$ such that $D_{u_{i}}D_{u_{j}}g(x)=0$, $\forall~1\leq i<j\leq \tau$. To this end, we introduce the property ($\mathbf{P}_{\tau}$). We will deduce new vectorial bent and plateaued functions starting from the observations on (bent) Boolean functions satisfying this property in next section, and we  believe that this property has its own value.

\begin{defn} Let $n,\tau $ be two positive integers. Let $g(x)\in \mathcal{B}_{n}$, and  $g$ is {\it said to satisfy property} $\mathbf{(P_{\tau})}$ if there exist $\tau$ pairwise distinct elements $u_{1},\ldots,u_{\tau}\in \mathbb{F}_{2^{n}}$ such that $D_{u_{i}}D_{u_{j}}g(x)=0$ for any $1\leq i<j\leq \tau$. In this case, the set $\{u_{1},\ldots,u_{\tau} \} \subseteq \mathbb{F}_{2^{n}} $ is called the defining set of $g(x)$ satisfying property $\mathbf{(P_{\tau})}$.
\end{defn}

 In the following, we give some observations on functions $g(x)$ satisfying  property $\mathbf{(P_{\tau})}$. We will not specify the subfix $\tau$ in case there is no danger of confusion. We need the following lemma.

\begin{lem}\label{2-derivative-0} For any $a,b,c\in \mathbb{F}_{2^{n}}$, if $D_{a}D_{b}g(x)=D_{a}D_{c}g(x)=0$ for all $x\in \mathbb{F}_{2^{n}}$, then $D_{a}D_{b+c}g(x)=0$ for all  $x\in \mathbb{F}_{2^{n}}$. Furthermore, if there exists $\{u_{1},\ldots, u_{\tau}\}\subseteq \mathbb{F}_{2^{n}}$ such that $D_{u_{i}}D_{u_{j}}g(x)=0$ for any $1\leq i<j\leq \tau$, then for any $a, b \in L(u_{1},\ldots,u_{\tau})$, we have $D_{a}D_{b}g(x)=0$, where $L(u_{1},\ldots,u_{\tau})$ is the subspace of $\mathbb{F}_{2^{n}}$ spanned by $\{u_{1},\ldots, u_{\tau}\}$ over $\mathbb{F}_{2}$.
\end{lem}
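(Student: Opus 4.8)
The plan is to prove the statement in two stages, first the three-element case and then the subspace generalization by induction. The key computational identity I would establish is that the second-order derivative operators $D_a$ behave linearly in their shift arguments when composed, modulo known second-order derivatives. Concretely, for the first claim I would expand $D_a D_{b+c} g(x)$ directly from the definition. Writing out $D_{b+c}g(x) = g(x) + g(x+b+c)$ and then applying $D_a$, one gets a four-term alternating sum. The main trick is to insert and cancel the ``cross'' term $g(x+b)+g(x+c)+g(x+a+b)+g(x+a+c)$ so as to rewrite $D_a D_{b+c}g$ as a sum (over $\mathbb{F}_2$) of $D_a D_b g$ evaluated at appropriately shifted points and $D_a D_c g$ evaluated at shifted points.

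The cleanest route I would take is the algebraic identity
\begin{eqnarray*}
D_a D_{b+c} g(x) = D_a D_b g(x) + D_a D_c g(x+b),
\end{eqnarray*}
which one verifies by expanding both sides into their eight constituent evaluations of $g$ and checking that the same eight terms appear (in $\mathbb{F}_2$, everything is an additive cancellation of matched pairs). Given the hypotheses $D_a D_b g \equiv 0$ and $D_a D_c g \equiv 0$ \emph{for all} $x$, the right-hand side vanishes identically: the first summand is zero by hypothesis, and the second is zero because $D_a D_c g$ vanishes at \emph{every} point, in particular at $x+b$. This immediately gives $D_a D_{b+c} g(x) = 0$ for all $x$, settling the first assertion.

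For the furthermore part I would argue by building up from the generators. Since any element of $L(u_1,\ldots,u_\tau)$ is an $\mathbb{F}_2$-linear combination of the $u_i$, I would first observe that the hypothesis gives $D_{u_i} D_{u_j} g = 0$ for all pairs, and then extend this to $D_{v} D_{u_j} g = 0$ for any $v$ in the span and any generator $u_j$ by repeatedly applying the identity just proved (adding one generator at a time to the shift $v$, with the base case being a single generator). A symmetric induction on the second argument then upgrades this to $D_a D_b g = 0$ for arbitrary $a,b \in L(u_1,\ldots,u_\tau)$. The one point requiring a little care is the degenerate case where $a$, $b$, or $a+b$ equals zero: but $D_0 g \equiv 0$ trivially makes $D_a D_b g = 0$ hold in those cases, so they cause no trouble.

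The main obstacle, such as it is, lies entirely in the bookkeeping of the base identity: one must be scrupulous that the composed operator $D_a D_{b+c}$ really does split as claimed and that the hypothesis is being used at the \emph{shifted} argument $x+b$ rather than at $x$ — this is exactly where the ``for all $x$'' strength of the hypothesis is essential and cannot be weakened to a single point. Once that identity is pinned down, the induction is routine and the subspace statement follows formally by linearity, so I would present the proof as: (i) verify the eight-term expansion identity, (ii) deduce the three-element claim, (iii) induct on the number of generators appearing in $a$ and then in $b$ to conclude for the whole span.
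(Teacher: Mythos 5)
Your proof is correct and follows essentially the same route as the paper's: your identity $D_aD_{b+c}g(x)=D_aD_bg(x)+D_aD_cg(x+b)$ is precisely the paper's step of summing the two hypothesized second-order derivatives (one evaluated at the shifted point $x+b$) and observing the eight-term cancellation. The ``furthermore'' part is likewise the same linearity/induction argument, which the paper states more tersely by invoking the first assertion together with $D_aD_ag=0$.
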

\begin{proof} By assumption, we have
\begin{eqnarray*}
\begin{cases}
g(x)\!+\!g(x\!+\!a)+g(x\!+\!b)+g(x\!+\!a\!+\!b)\!=\!0,\\
g(x)\!+\!g(x\!+\!a)+g(x\!+\!c)+g(x\!+\!a\!+\!c)\!=\!0,
\end{cases}
\end{eqnarray*}
for all $x\in \mathbb{F}_{2^{n}}$. Then $g(x+b)+g(x+c)+g(x+a+b)+g(x+a+c)=0$ for all $x\in \mathbb{F}_{2^{n}}$. We have by replacing $x+b$ by $x$ that $g(x)+g(x+b+c)+g(x+a)+g(x+b+a+c)=0$, i.e., $D_{a}D_{b+c}g(x)=0$ for all $x\in \mathbb{F}_{2^{n}}$. The last assertion follows from the first assertion and the fact that for any $a\in \mathbb{F}_{2^{n}}$, $x\in \mathbb{F}_{2^{n}}$,  $D_{a}D_{a}g(x)=0$.
\end{proof}

\begin{rmk} From a given Boolean function $g(x)$ satisfying property ($\mathbf{P}_{\tau}$), one can obtain a lot of other functions satisfying this property with the same defining set as $g(x)$. Indeed, let $g(x)\in \mathcal{B}_{n}$ be any Boolean function satisfying property ($\mathbf{P_{\tau}}$) with defining set $\{u_{1},\ldots,u_{\tau} \} \subseteq \mathbb{F}_{2^{n}} $. For any $b\in L(u_{1},\ldots,u_{\tau})$, set $h(x):=g(x)g(x+b)$, then $h(x)$ is also a Boolean function satisfying Property ($\mathbf{P_{\tau}}$) with the same defining set. To see this, it needs only to show $D_{u_{i}}D_{u_{j}}h(x)=0$ for any $1\leq i<j\leq \tau $. Note that for any $\varrho\in L(u_{1},\ldots,u_{\tau})$, by Lemma \ref{2-derivative-0}, we have $g(x+\varrho)g(x+\varrho+b)=g(x+\varrho)(g(x)+g(x+\varrho)+g(x+b))=g(x)g(x+\varrho)+g(x+\varrho)+g(x+\varrho)g(x+b)$. Then one has \begin{eqnarray*}\!D_{u_{i}}D_{u_{j}}h(x)&\!=\!&g(x)g(x\!+\!b)\!+\!g(x\!+\!u_{i})g(x\!+\!u_{i}\!+\!b)\!+\!g(x+u_{j})g(x\!+\!u_{j}\!+\!b)+g(x\!+\!u_{i}\!+\!u_{j})g(x\!+\!u_{i}\!+\!u_{j}\!+\!b)\!\\
&\!=\!&g(x)g(x\!+\!b)\!+\!g(x\!+\!u_{i})\!+\!g(x\!+\!u_{j})\!+\!g(x\!+\!u_{i}\!+\!u_{j})\\
&&\!+\!g(x)(g(x\!+\!u_{i})\!+\!g(x\!+\!u_{j})\!+\!g(x\!+\!u_{i}\!+\!u_{j}))\!+\!g(x\!+\!b)(g(x\!+\!u_{i})\!+\!g(x\!+\!u_{j})\!+\!g(x\!+\!u_{i}\!+\!u_{j}))\\
&\!=\!&g(x)g(x\!+\!b)\!+\!g(x)\!+\!g(x)\!+\!g(x\!+\!b)g(x)\!=\!0.
\end{eqnarray*}
\end{rmk}

The following observation is vital to our constructions of new vectorial bent functions. In fact, this observation establishes a link between property ($\mathbf{P}_{\tau}$) and the condition of Construction 7 in \cite{Tang17} which we will recall in the following section.

\begin{lem}\label{keylemma1} Let $g(x)\in \mathcal{B}_{n}$ be any Boolean function. The following two assertions are equivalent:

1) $g(x)$ satisfies property $\mathbf{(P_{\tau})}$ with the defining set $\{u_{1},\ldots,u_{\tau}\}\subseteq \mathbb{F}_{2^{n}}$.

2) there exist $u_{1},\ldots,u_{\tau}\in \mathbb{F}_{2^{n}}$, and $g_{1},\ldots, g_{\tau}\in \mathcal{B}_{n}$ such that $g(x+\sum\limits^{\tau}_{i=1}w_{i}u_{i})=g(x)+\sum\limits^{\tau}_{i=1}w_{i}g_{i}(x)$ for any $w=(w_{1},\ldots,w_{\tau})\in \mathbb{F}^{\tau}_{2}$.

Furthermore, if $g(x)$ satisfies property $\mathbf{(P_{\tau})}$ with the defining set $\{u_{1},\ldots,u_{\tau}\}\subseteq \mathbb{F}_{2^{n}}$, then  the $g_{i}(x)$ in 2) is exactly $D_{u_{i}}g(x), ~i=1,\ldots,\tau$.
\end{lem}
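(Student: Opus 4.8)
The plan is to prove the two implications separately, obtaining the ``furthermore'' clause as a byproduct of the first one. Throughout I will use two elementary facts: the second-order derivative is symmetric, $D_{u_i}D_{u_j}g=D_{u_j}D_{u_i}g$; and the vanishing $D_{u_j}D_{u_i}g(x)=0$ is precisely the statement that $D_{u_i}g$ is invariant under translation by $u_j$, i.e. $D_{u_i}g(x+u_j)=D_{u_i}g(x)$ for all $x$. These are immediate from the definitions $D_ag(x)=g(x)+g(x+a)$ and $D_aD_bg(x)=g(x)+g(x+a)+g(x+b)+g(x+a+b)$, and both directions will then reduce to characteristic-$2$ cancellation.

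For $1)\Rightarrow 2)$, I would set $g_i(x):=D_{u_i}g(x)$ and prove $g(x+\sum_{i=1}^{\tau}w_iu_i)=g(x)+\sum_{i=1}^{\tau}w_ig_i(x)$ by induction on the Hamming weight $|S|$ of the support $S=\{i:w_i=1\}$. The cases $|S|=0$ and $|S|=1$ are immediate from the definition of $D_{u_i}g$. For the inductive step I would single out an index $j\in S$, write $\sum_{i\in S}u_i=u_j+\sum_{i\in S\setminus\{j\}}u_i$, apply the inductive hypothesis at the base point $x+u_j$ with support $S\setminus\{j\}$ (of size $|S|-1$), and then invoke the translation invariance $D_{u_i}g(x+u_j)=D_{u_i}g(x)$, valid for every $i\neq j$ by property $(\mathbf{P}_{\tau})$, to replace each $D_{u_i}g(x+u_j)$ by $D_{u_i}g(x)$. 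Combining this with $g(x+u_j)=g(x)+D_{u_j}g(x)$ yields the claimed expression. Since the derivation fixes $g_i=D_{u_i}g$, this simultaneously establishes the ``furthermore'' assertion.

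For $2)\Rightarrow 1)$, I would first specialize $w$ to the standard basis vector $e_i$, which forces $g(x+u_i)=g(x)+g_i(x)$ and hence $g_i=D_{u_i}g$. Then, specializing $w=e_i+e_j$ for a fixed pair $i\neq j$ gives $g(x+u_i+u_j)=g(x)+D_{u_i}g(x)+D_{u_j}g(x)$. Substituting the three resulting expressions for $g(x+u_i)$, $g(x+u_j)$, $g(x+u_i+u_j)$ into $D_{u_i}D_{u_j}g(x)=g(x)+g(x+u_i)+g(x+u_j)+g(x+u_i+u_j)$ and cancelling in characteristic $2$ shows $D_{u_i}D_{u_j}g\equiv 0$ for every $1\leq i<j\leq\tau$, i.e. $g$ satisfies $(\mathbf{P}_{\tau})$ with defining set $\{u_1,\ldots,u_\tau\}$.

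Both implications are short computations, so the only place demanding care is the bookkeeping in the induction for $1)\Rightarrow 2)$: one must apply the inductive hypothesis at the \emph{shifted} base point $x+u_j$ and check that the translation invariance $D_{u_i}g(x+u_j)=D_{u_i}g(x)$ is available for exactly the indices $i\in S\setminus\{j\}$, which it is precisely because $i\neq j$. I expect this to be the main, though still routine, obstacle; everything else is characteristic-$2$ bookkeeping.
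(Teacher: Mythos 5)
Your proposal is correct and follows essentially the same route as the paper: both directions set $g_i=D_{u_i}g$, prove $1)\Rightarrow 2)$ by induction on the Hamming weight of $w$ using the translation invariance $D_{u_i}g(x+u_j)=D_{u_i}g(x)$ (which is exactly $D_{u_i}D_{u_j}g=0$), and prove $2)\Rightarrow 1)$ by specializing $w$ to the standard basis vectors and their pairwise sums. The only cosmetic difference is that you state the translation invariance up front as a reformulation of the vanishing second-order derivative, whereas the paper rederives it inside the inductive step from the weight-one and weight-two cases.
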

\begin{proof} $1)\Rightarrow 2)$: By assumption, there exist $u_{1},\ldots,u_{\tau} \in \mathbb{F}_{2^{n}}$ such that $D_{u_{i}}D_{u_{j}}g(x)=0$ for any $1\leq i<j\leq \tau$, and all $x\in \mathbb{F}_{2^{n}}$. Set $g_{i}(x):=D_{u_{i}}g(x), i=1,\ldots, \tau$. We will give our proof by induction on $s=wt(w)$. For $s=1$, we have $g(x+u_{i})=g(x)+g_{i}(x)$ by the definition of $g_{i}(x)$, for any $i=1,\ldots,\tau$. Consider the case of $s=2$: for  any $1\leq i<j\leq \tau$, one has $$g(x+u_{i}+u_{j})=g(x)+g(x+u_{i})+g(x+u_{j})=g(x)+D_{u_{i}}g(x)+D_{u_{j}}g(x)=g(x)+g_{i}(x)+g_{j}(x),$$ where the first identity is due to the assumption that $D_{u_{i}}D_{u_{j}}g(x)=0$. Now assume that the assertion holds for any $1\leq s\leq \tau-1$, that is,
$$g(x+u_{i_{1}}+\cdots+u_{i_{s}})=g(x)+g_{i_{1}}(x)+\cdots+g_{i_{s}}(x),  {\textit where~}  \{i_{1},\ldots,i_{s}\}\subseteq \{1,\ldots, \tau\}.$$
Then for any $w\in \mathbb{F}^{\tau}_{2}$ with $wt(w)=s+1$, we have
\begin{eqnarray*}g(x\!+\!u_{i_{1}}\!+\!\cdots+u_{i_{s}}\!+\!u_{i_{s+1}}) &\!=\!& g((x+u_{i_{1}})+u_{i_{2}}+\cdots+u_{i_{s}}+u_{i_{s+1}}), \\           &\!=\!&g(x+u_{i_{1}})+g_{i_{2}}(x+u_{i_{1}})+\cdots+g_{i_{s+1}}(x+u_{i_{1}}),\\
             &\!=\!&g(x)+g_{i_{1}}(x)+g_{i_{2}}(x)+\cdots +g_{i_{s+1}}(x),
             \end{eqnarray*}
where the second equality is from the induction on $s$, and the last equality is deduced by the definition of $g_{i}$, $i=1,2,\ldots,\tau$, and the induction on $s$ of the cases $s=1,~2$:  $g_{i_{t}}(x+u_{i_{1}})=g(x+u_{i_{1}})+g(x+u_{i_{1}}+u_{i_{t}})=g(x)+g_{i_{1}}(x)+g(x)+g_{i_{1}}(x)+g_{i_{t}}(x)=g_{i_{t}}(x).$

 $2)\Rightarrow 1)$: Let $\varepsilon^{1}=(1,0,\ldots,0),~\varepsilon^{2}=(0,1,\ldots,0),$  ~$\ldots,~\varepsilon^{\tau}=(0,0,\ldots,1)$ be the basis of $\mathbb{F}^{\tau}_{2}$. Let $w=\varepsilon^{i}$. Then by assumption we have $g_{i}(x)=D_{u_{i}}g(x)$, $i=1,\ldots,\tau$. For any $1\leq i<j\leq \tau$,  let $w=\varepsilon^{i}+\varepsilon^{j}$, we have $g(x+u_{i}+u_{j})=g(x)+g_{i}(x)+g_{j}(x)$, that is, $g(x+u_{i}+u_{j})=g(x)+D_{u_{i}}g(x)+D_{u_{j}}g(x)=g(x)+g(x+u_{i})+g(x+u_{j})$. Then $g(x)+g(x+u_{i})+g(x+u_{j})+g(x+u_{i}+u_{j})=0$, i.e., $D_{u_{i}}D_{u_{j}}g(x)=0$ for any $1\leq i<j\leq \tau$. We are done. \end{proof}

\section{Generic constructions of vectorial bent and plateaued functions}

In this section, we will construct new vectorial bent functions of the form \eqref{cons1} from known vectorial bent functions.  At first we give the following theorem.

\begin{thm}\label{cons-1} Let $n$ be an even positive integer and $m$ be a positive divisor of $n$. Let $G(x)$ be a vectorial bent $(n,m)$-function, and let $g(x)\in \mathcal{B}_{n}$. Then $H(x)=G(x)+g(x)$ is a vectorial bent (plateaued) function if and only if for any  $\lambda \in \mathbb{F}^{\ast}_{2^{m}}$ such that ${\rm Tr}^{m}_{1}(\lambda)=1$, $G_{\lambda}(x)+g(x)$ is a bent (plateaued) Boolean function.
\end{thm}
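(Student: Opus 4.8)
The plan is to reduce the vectorial condition on $H$ to a condition on its individual components $H_{\lambda}(x)={\rm Tr}^{m}_{1}(\lambda H(x))$, $\lambda\in\mathbb{F}_{2^{m}}^{*}$, using the defining fact that $H$ is vectorial bent (resp.\ plateaued) precisely when every $H_{\lambda}$ is a bent (resp.\ plateaued) Boolean function. The key computation, which I expect to be the crux of the argument, is to exploit that $g(x)\in\mathbb{F}_{2}\subseteq\mathbb{F}_{2^{m}}$, so that $\lambda g(x)\in\{0,\lambda\}$ and hence, by $\mathbb{F}_{2}$-linearity of the trace, ${\rm Tr}^{m}_{1}(\lambda g(x))=g(x)\cdot{\rm Tr}^{m}_{1}(\lambda)$. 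Combined with $H(x)=G(x)+g(x)$ this yields the decomposition
$$H_{\lambda}(x)={\rm Tr}^{m}_{1}(\lambda G(x))+{\rm Tr}^{m}_{1}(\lambda g(x))=G_{\lambda}(x)+{\rm Tr}^{m}_{1}(\lambda)\,g(x).$$

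First I would split the index set $\mathbb{F}_{2^{m}}^{*}$ according to the value ${\rm Tr}^{m}_{1}(\lambda)\in\mathbb{F}_{2}$. For those $\lambda$ with ${\rm Tr}^{m}_{1}(\lambda)=0$ the identity above collapses to $H_{\lambda}=G_{\lambda}$, which is bent (and \emph{a fortiori} plateaued, since every bent function is plateaued with $s=n/2$) because $G$ is by hypothesis vectorial bent; thus these components impose no constraint. For those $\lambda$ with ${\rm Tr}^{m}_{1}(\lambda)=1$ the identity reads $H_{\lambda}=G_{\lambda}+g$, so the bentness (resp.\ plateauedness) of this component is exactly the condition appearing in the statement. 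Since ${\rm Tr}^{m}_{1}$ is surjective onto $\mathbb{F}_{2}$, such $\lambda$ do exist, so the stated condition is non-vacuous.

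The final step is to assemble these observations. By the characterisation of vectorial bentness/plateauedness in terms of components, $H$ is vectorial bent (resp.\ plateaued) if and only if every $H_{\lambda}$, $\lambda\in\mathbb{F}_{2^{m}}^{*}$, is bent (resp.\ plateaued); by the splitting above this holds if and only if $G_{\lambda}+g$ is bent (resp.\ plateaued) for every $\lambda$ with ${\rm Tr}^{m}_{1}(\lambda)=1$, the trace-zero components being automatic. This proves both equivalences at once, the only difference between the bent and plateaued versions being the replacement of \textbf{bent} by \textbf{plateaued} throughout. I anticipate no serious obstacle beyond keeping the bookkeeping between the $\mathbb{F}_{2}$-valued quantity $g(x)$ and the $\mathbb{F}_{2^{m}}$-valued quantity $\lambda$ clean: the entire argument is a direct consequence of the linearity of the trace together with the component-wise definition of vectorial bentness and plateauedness.
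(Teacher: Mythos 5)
Your proposal is correct and follows essentially the same route as the paper: both compute $H_{\lambda}(x)=G_{\lambda}(x)+{\rm Tr}^{m}_{1}(\lambda)g(x)$ and split the components according to the value of ${\rm Tr}^{m}_{1}(\lambda)$, the trace-zero components being automatically bent since $G$ is vectorial bent. Your additional remarks (bent implies plateaued, surjectivity of the trace) are correct but not needed beyond what the paper records.
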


\begin{proof} For any $\lambda\in \mathbb{F}^{\ast}_{2^{m}}$, we have $H_{\lambda}(x)={\rm Tr}^{m}_{1}(\lambda H(x))={\rm Tr}^{m}_{1}(\lambda G(x))+{\rm Tr}^{m}_{1}(\lambda)g(x)$, and thus
\begin{eqnarray*}H_{\lambda}(x)=\begin{cases}G_{\lambda}(x), &{\rm~if~Tr}^{m}_{1}(\lambda)=0,\\ G_{\lambda}(x)+g(x), &{\rm~if~Tr}^{m}_{1}(\lambda)=1.
 \end{cases}
 \end{eqnarray*} Therefore, by definition $H(x)$ is a vectorial bent (plateaued) $(n,m)$-function if and only if for all $\lambda \in \mathbb{F}^{\ast}_{2^{m}}$ with ${\rm Tr}^{m}_{1}(\lambda)=1$, $G_{\lambda}(x)+g(x)$ is bent (plateaued), since $G(x)$ is vectorial bent.
\end{proof}

At a first glance, it would appear that finding such functions $G(x)$ and $g(x)$ satisfying the conditions of Theorem \ref{cons-1} might be quite difficult. However, our Corollary \ref{vecbent} below shows that, out of reckoning, there are quite a lot of such functions after we obtain Lemma \ref{keylemma1}, in which we establish a link between property $\mathbf{(P_{\tau})}$ and the condition of Construction 7 in  \cite{Tang17}. In what follows, let us recall the Construction 7 of \cite{Tang17}, in which the authors have a very nice observation on generating new bent functions from known ones.

Let $n=2k$, and  $u_{1},\ldots, u_{\tau} $ be distinct elements of $\mathbb{F}_{2^{n}}$, where $\tau$ is an integer with $1\leq \tau \leq k$. Let $g(x)\in \mathcal{B}_{n}$ be a bent function whose dual $g^{\ast}(x)$ satisfies that $g^{\ast}(x+\sum\limits^{\tau}_{i=1}w_{i}u_{i})=g^{\ast}(x)+\sum\limits^{\tau}_{i=1}w_{i}g_{i}(x)$ for any $x\in \mathbb{F}_{2^{n}}$ and for any $w=(w_{1},\ldots,w_{\tau})\in \mathbb{F}^{\tau}_{2}$, where $g_{i}(x)\in \mathcal{B}_{n}$ for any $1\leq i\leq \tau$. Let $F(X_{1},\ldots, X_{\tau})$ be any reduced polynomial in  $ \mathbb{F}_{2}[X_{1},\ldots,X_{\tau}]$. Then by \cite[Theorem 8]{Tang17}, $f(x):=g(x)+F({\rm Tr}^{n}_{1}(u_{1}x), {\rm Tr}^{n}_{1}(u_{2}x)), \ldots,{\rm Tr}^{n}_{1}(u_{\tau}x))$ is bent, with its dual $f^{\ast}(x)=g^{\ast}(x)+F(g_{1}(x),\ldots,g_{\tau}(x))$. In other words, using Lemma \ref{keylemma1}, the function $g(x)$ described in \cite[Construction 7]{Tang17} is a bent function such that its dual $g^{\ast}(x)$ satisfies property $\mathbf{(P_{\tau})}$ with the defining set $\{u_{1},\ldots,u_{\tau}\}$. In fact,  we have proved the following theorem.

\begin{thm}\label{tang} Let $n=2k$. Let $g(x)\in \mathcal{B}_{n}$ be a bent function such that its dual function $g^{\ast}(x)$ satisfies property  $\mathbf{(P_{\tau})}$ with the defining set  $\{u_{1},\ldots,u_{\tau}\}$. Let $F(X_{1},\ldots, X_{\tau})$ be any reduced polynomial in  $ \mathbb{F}_{2}[X_{1},\ldots,X_{\tau}]$.  Then the Boolean function $g(x)+F({\rm Tr}^{n}_{1}(u_{1}x), {\rm Tr}^{n}_{1}(u_{2}x)), \ldots,{\rm Tr}^{n}_{1}(u_{\tau}x))$ is bent, with its dual $g^{\ast}(x)+F(D_{u_{1}}g^{\ast}(x),\ldots,D_{u_{\tau}}g^{\ast}(x))$.
\end{thm}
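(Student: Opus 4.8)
The plan is to translate Theorem~\ref{tang} into the exact hypotheses of Construction 7 of \cite{Tang17} by invoking Lemma~\ref{keylemma1}, and then read off the conclusion from \cite[Theorem~8]{Tang17}. The statement asserts precisely that the two descriptions of the dual coincide, so the whole proof is an identification argument rather than a fresh computation. First I would record the hypothesis: $g$ is bent and its dual $g^{\ast}$ satisfies property $\mathbf{(P_{\tau})}$ with defining set $\{u_{1},\ldots,u_{\tau}\}$, meaning $D_{u_{i}}D_{u_{j}}g^{\ast}=0$ for all $1\leq i<j\leq \tau$.

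Next I would apply Lemma~\ref{keylemma1} to the function $g^{\ast}$ (playing the role of ``$g$'' in that lemma). The implication $1)\Rightarrow 2)$ of the lemma yields Boolean functions $g_{1},\ldots,g_{\tau}\in\mathcal{B}_{n}$ with
\begin{eqnarray*}
g^{\ast}\Bigl(x+\sum_{i=1}^{\tau}w_{i}u_{i}\Bigr)=g^{\ast}(x)+\sum_{i=1}^{\tau}w_{i}g_{i}(x),\quad \forall\, w=(w_{1},\ldots,w_{\tau})\in\mathbb{F}_{2}^{\tau},
\end{eqnarray*}
and moreover the concluding clause of Lemma~\ref{keylemma1} identifies each $g_{i}$ explicitly as $g_{i}(x)=D_{u_{i}}g^{\ast}(x)$. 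This is exactly the structural condition imposed on the dual in the setup of \cite[Construction 7]{Tang17} (with $g$ bent and its dual decomposing affinely along the directions $u_{i}$), so the hypotheses of \cite[Theorem~8]{Tang17} are met verbatim.

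With this verified, I would invoke \cite[Theorem~8]{Tang17} directly: for any reduced polynomial $F(X_{1},\ldots,X_{\tau})\in\mathbb{F}_{2}[X_{1},\ldots,X_{\tau}]$, the function $f(x)=g(x)+F({\rm Tr}^{n}_{1}(u_{1}x),\ldots,{\rm Tr}^{n}_{1}(u_{\tau}x))$ is bent with dual $f^{\ast}(x)=g^{\ast}(x)+F(g_{1}(x),\ldots,g_{\tau}(x))$. Finally I would substitute the explicit identification $g_{i}=D_{u_{i}}g^{\ast}$ obtained from Lemma~\ref{keylemma1} into this dual formula, producing $f^{\ast}(x)=g^{\ast}(x)+F(D_{u_{1}}g^{\ast}(x),\ldots,D_{u_{\tau}}g^{\ast}(x))$, which is precisely the claimed expression and completes the proof.

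I do not expect any serious obstacle, since the content of the theorem is genuinely the observation that property~$\mathbf{(P_{\tau})}$ for $g^{\ast}$ and the defining condition of Construction 7 are the same condition, a fact already packaged in Lemma~\ref{keylemma1}. The only point requiring mild care is the bookkeeping of which function the lemma is applied to: Lemma~\ref{keylemma1} is stated for an arbitrary Boolean function, and here it must be applied to the dual $g^{\ast}$ rather than to $g$ itself, so I would be explicit that it is the derivatives of $g^{\ast}$ (not of $g$) that vanish and hence the $g_{i}$ appearing in Tang et al.'s dual formula are the $D_{u_{i}}g^{\ast}$. Given that \cite[Theorem~8]{Tang17} is quoted as an available external result, no further estimation of Walsh spectra is needed.
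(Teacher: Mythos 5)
Your proposal is correct and matches the paper's own argument essentially verbatim: the paper likewise recalls Construction 7 and Theorem 8 of Tang et al., uses Lemma~\ref{keylemma1} applied to $g^{\ast}$ to identify property $\mathbf{(P_{\tau})}$ with the hypothesis of that construction, and reads off $g_{i}=D_{u_{i}}g^{\ast}$ to obtain the stated dual. Your explicit care about applying the lemma to $g^{\ast}$ rather than $g$ is exactly the right bookkeeping point.
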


\begin{rmk} We have to point out that though the authors in \cite{Tang17} give a nice secondary construction of bent functions  from bent functions $g(x)$ whose dual $g^{\ast}(x)$ satisfies the condition of the Construction 7 in \cite{Tang17}, they do not give any additional insights on this condition. We believe our property $\mathbf{(P_{\tau})}$  gives a quick and effective way to judge whether a given bent function satisfies this condition.
\end{rmk}

By Theorem \ref{cons-1}, and Theorem \ref{tang}, we can give a new secondary construction of vectorial bent functions.

\begin{coro}\label{vecbent}  Let $n=2k$ be an even positive integer, and $m$ be a positive divisor of $n$. Let $u_{1},\ldots,u_{\tau}\in\mathbb{F}_{2^{n}}$ be distinct, where $1\leq \tau \leq k$. Let $F(X_{1},\ldots, X_{\tau})$ be a reduced polynomial in $ \mathbb{F}_{2}[X_{1},\ldots,X_{\tau}]$. Assume that $G(x)$ is a vectorial bent $(n,m)$-function such that for any $\lambda\in \mathbb{F}_{2^{k}}$ with ${\rm Tr}^{k}_{1}(\lambda)=1$, the function $G^{\ast}_{\lambda}(x)$ satisfies property $(\mathbf{P}_{\tau})$ with the defining set $\{u_{1},\ldots,u_{\tau}\}$, then $H(x):=G(x)+F({\rm Tr}^{n}_{1}(u_{1}x), {\rm Tr}^{n}_{1}(u_{2}x)), \ldots,{\rm Tr}^{n}_{1}(u_{\tau}x))$ is a vectorial bent $(n,m)$-function.
\end{coro}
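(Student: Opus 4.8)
The plan is to read $H$ as an instance of the construction $H(x)=G(x)+g(x)$ treated in Theorem \ref{cons-1}, taking the single Boolean function $g(x)=F({\rm Tr}^{n}_{1}(u_{1}x),\ldots,{\rm Tr}^{n}_{1}(u_{\tau}x))$, and then to verify the bentness criterion of that theorem one component at a time by invoking Theorem \ref{tang}. First I would apply Theorem \ref{cons-1}: since $G$ is vectorial bent, $H$ is a vectorial bent $(n,m)$-function if and only if the Boolean function $G_{\lambda}(x)+g(x)$ is bent for every $\lambda\in\mathbb{F}^{\ast}_{2^{m}}$ with ${\rm Tr}^{m}_{1}(\lambda)=1$. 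Thus the whole problem is reduced to these \emph{trace-one} components, which is exactly the collection on which the hypothesis is imposed.

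Next, fix such a $\lambda$. Because $G$ is vectorial bent, the component $G_{\lambda}={\rm Tr}^{m}_{1}(\lambda G)$ is a bent function in $\mathcal{B}_{n}$, and its dual is precisely $G^{\ast}_{\lambda}:=(G_{\lambda})^{\ast}$. By assumption, $G^{\ast}_{\lambda}$ satisfies property $(\mathbf{P}_{\tau})$ with the defining set $\{u_{1},\ldots,u_{\tau}\}$. This is exactly the input demanded by Theorem \ref{tang}, namely a bent function whose dual enjoys property $(\mathbf{P}_{\tau})$ on the prescribed defining set.

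I would then apply Theorem \ref{tang} to the bent function $G_{\lambda}$ together with the reduced polynomial $F$, which yields that
$$G_{\lambda}(x)+F({\rm Tr}^{n}_{1}(u_{1}x),\ldots,{\rm Tr}^{n}_{1}(u_{\tau}x))=G_{\lambda}(x)+g(x)$$
is bent, with dual $G^{\ast}_{\lambda}(x)+F(D_{u_{1}}G^{\ast}_{\lambda}(x),\ldots,D_{u_{\tau}}G^{\ast}_{\lambda}(x))$, a fact I would record but do not actually need here. As $\lambda$ ranged over an arbitrary trace-one component, the criterion of Theorem \ref{cons-1} is satisfied for all of them, and hence $H$ is a vectorial bent $(n,m)$-function, finishing the argument.

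The main point is organizational rather than a genuine obstacle: the corollary is a clean composition of Theorem \ref{cons-1} (which isolates the trace-one components) with Theorem \ref{tang} (itself the reformulation, via Lemma \ref{keylemma1}, of Construction 7 of \cite{Tang17}). The one subtlety requiring care is that the defining set $\{u_{1},\ldots,u_{\tau}\}$ appearing in the hypothesis on each $G^{\ast}_{\lambda}$ must be the \emph{same} set used to build $g$, so that a single reduced polynomial $F$ in the fixed linear forms ${\rm Tr}^{n}_{1}(u_{i}x)$ can be applied uniformly across all relevant $\lambda$; this uniformity is precisely what is assumed, and it is what makes $g$ one fixed Boolean function rather than a $\lambda$-dependent correction. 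A secondary bookkeeping point is to keep the index range of $\lambda$ aligned with the reduction supplied by Theorem \ref{cons-1}, i.e.\ the trace-one components over $\mathbb{F}_{2^{m}}$.
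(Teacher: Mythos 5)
Your proposal is correct and follows essentially the same route as the paper: reduce via Theorem \ref{cons-1} to the trace-one components $G_{\lambda}+g$, then apply Theorem \ref{tang} to each such $G_{\lambda}$ using the hypothesis that $G^{\ast}_{\lambda}$ satisfies property $(\mathbf{P}_{\tau})$ with the given defining set. No further comparison is needed.
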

\begin{proof}  By Theorem \ref{cons-1}, it need only to show that for each $\lambda\in \mathbb{F}^{\ast}_{2^{m}}$ with ${\rm Tr}^{m}_{1}(\lambda)=1$,
$$G_{\lambda}(x)+F({\rm Tr}^{n}_{1}(u_{1}x), {\rm Tr}^{n}_{1}(u_{2}x)), \ldots,{\rm Tr}^{n}_{1}(u_{\tau}x))$$
is bent. Since $G^{\ast}_{\lambda}(x)$ satisfies property $(\mathbf{P}_{\tau})$ with the defining set $\{u_{1},\ldots,u_{\tau}\}$, say $D_{u_{i}}D_{u_{j}}G_{\lambda}^{\ast}(x)=0$ for any $1\leq i<j\leq \tau$. By Theorem \ref{tang}, $G_{\lambda}(x)+F({\rm Tr}^{n}_{1}(u_{1}x), {\rm Tr}^{n}_{1}(u_{2}x)), \ldots,{\rm Tr}^{n}_{1}(u_{\tau}x))$  is bent.  \end{proof}

Thanks to Corollary \ref{vecbent}, we can give a secondary construction of vectorial plateaued functions.

\begin{coro}\label{vecplateaued} Assuming conditions of Corollary \ref{vecbent}. Let $t$ be a positive integer. Let $F_{i}(X_{1},\ldots, X_{\tau})$, $i=1,
\ldots,t$, be any reduced polynomials in $ \mathbb{F}_{2}[X_{1},\ldots,X_{\tau}]$. Denote $F_{i}({\rm Tr}^{n}_{1}(u_{1}x), \ldots,{\rm Tr}^{n}_{1}(u_{\tau}x))$ by $f_{i}(x)$ for each $i=1,\ldots,t$. Then $\widehat{H}(x)=(G(x),f_{1}(x),\ldots,f_{t}(x))$ is a vectorial plateaued $(n,m+t)$-function if and only if the $(n,t)$-function $(f_{1}(x),\ldots,f_{t}(x))$ is vectorial plateaued.
\end{coro}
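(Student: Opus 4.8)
The plan is to decompose $\widehat H$ into its components and classify them according to the index $(\lambda,\mu)\in(\mathbb{F}_{2^m}\times\mathbb{F}_2^t)\setminus\{0\}$. Since $\widehat H=(G,f_1,\dots,f_t)$, every component has the shape
\[
\widehat H_{(\lambda,\mu)}(x)={\rm Tr}^m_1(\lambda G(x))+\sum_{i=1}^t\mu_i f_i(x)=G_\lambda(x)+\sum_{i=1}^t\mu_i f_i(x).
\]
The first thing I would record is that $\sum_{i=1}^t\mu_i f_i(x)=F_\mu({\rm Tr}^n_1(u_1x),\dots,{\rm Tr}^n_1(u_\tau x))$, where $F_\mu:=\sum_{i=1}^t\mu_iF_i$ is again a reduced polynomial in $\mathbb{F}_2[X_1,\dots,X_\tau]$ (possibly the zero polynomial, in which case the component collapses to the pure $G_\lambda$ case). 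This follows from linearity of evaluation and is exactly what puts each such combination into the form to which Theorem \ref{tang} applies.

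With this in hand the forward implication is essentially free: if $\widehat H$ is vectorial plateaued then in particular each component with $\lambda=0$ and $\mu\neq 0$, namely $\widehat H_{(0,\mu)}=\sum_i\mu_i f_i$, is plateaued; but these are precisely the components of $(f_1,\dots,f_t)$, so that function is vectorial plateaued.

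For the converse I would run through the three families of components. When $\lambda=0$ and $\mu\neq 0$ the component equals $\sum_i\mu_i f_i$, which is plateaued by the hypothesis that $(f_1,\dots,f_t)$ is vectorial plateaued. When $\mu=0$ and $\lambda\neq 0$ the component is $G_\lambda$, which is bent because $G$ is vectorial bent, hence plateaued. The remaining, genuinely nontrivial, family is the mixed one with $\lambda\neq 0$ and $\mu\neq 0$, where the component is $G_\lambda(x)+F_\mu({\rm Tr}^n_1(u_1x),\dots,{\rm Tr}^n_1(u_\tau x))$; here I would invoke Theorem \ref{tang} with the bent function $g=G_\lambda$, whose dual $G_\lambda^{\ast}$ satisfies property $(\mathbf{P}_\tau)$ with defining set $\{u_1,\dots,u_\tau\}$, to conclude that this component is bent and therefore plateaued. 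Collecting the three families shows every component of $\widehat H$ is plateaued, i.e.\ $\widehat H$ is vectorial plateaued.

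The main obstacle is precisely the mixed family. Unlike the pure $G_\lambda$ components, adding $F_\mu({\rm Tr}^n_1(u_1x),\dots)$ to a bent $G_\lambda$ need not preserve the plateaued property for an arbitrary bent function: already in the case $\tau=2$, $F_\mu=X_1X_2$, Corollary \ref{cubic}(2) shows that if $D_{u_1}D_{u_2}G_\lambda^{\ast}$ were nonconstant then the sum would exhibit the spectrum $\{0,2^{k},2^{k+1}\}$ of Theorem \ref{startingpoint}(3) and fail to be plateaued. Thus the whole argument hinges on $G_\lambda^{\ast}$ enjoying property $(\mathbf{P}_\tau)$ for \emph{every} nonzero $\lambda$ that arises in these mixed components, so that Theorem \ref{tang} is applicable and delivers bentness. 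I would therefore make sure that the availability of $(\mathbf{P}_\tau)$ for all relevant $\lambda$ (and not merely those with ${\rm Tr}^m_1(\lambda)=1$, which is all that the vectorial bent conclusion of Corollary \ref{vecbent} consumes) is genuinely supplied by the standing hypotheses, since it is exactly what excludes the non-plateaued spectrum and closes the converse.
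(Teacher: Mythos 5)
Your proof is correct and follows essentially the same route as the paper's: split the components of $\widehat H$ by the index $(\lambda,v)$, observe that $\langle v,(F_1,\dots,F_t)\rangle$ is again a reduced polynomial so that Theorem \ref{tang} makes every component with $\lambda\neq 0$ bent, and reduce the remaining components to those of $(f_1,\dots,f_t)$. The caveat you flag at the end is a genuine one: applying Theorem \ref{tang} to the mixed components requires $G_{\lambda}^{\ast}$ to satisfy $(\mathbf{P}_{\tau})$ for \emph{every} $\lambda\in\mathbb{F}_{2^m}^{\ast}$, whereas the hypotheses of Corollary \ref{vecbent} as stated only supply it for $\lambda$ of absolute trace one; the paper's own proof silently assumes the stronger condition, which does hold in all of its Section 5 instantiations (there $(\mathbf{P}_{\tau})$ is verified for all nonzero $\lambda$), so your insistence on checking this point is well placed rather than a defect of your argument.
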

\begin{proof} For $(\lambda,v)\in \mathbb{F}_{2^{m}}^*\times \mathbb{F}^{t}_{2}$,  according to Theorem \ref{tang} and Corollary \ref{vecbent}, $\langle(\lambda,v), \widehat{H}\rangle$=$G_{\lambda}+ \langle v, (f_{1},\ldots,f_{t})\rangle$ is bent, since $\langle v, (F_{1},\ldots,F_{t})\rangle$ is also a reduced polynomial in $ \mathbb{F}_{2}[X_{1},\ldots,X_{\tau}]$. Then $\widehat{H}$ is vectorial plateaued if and only if all the components functions   $\langle(0,v), \widehat{H}\rangle=\langle v, (f_{1},\ldots,f_{t})\rangle$ is plateaued. It means that  the $(n,t)$-function $(f_{1}(x),\ldots,f_{t}(x))$ is vectorial plateaued. This completes the proof.   \end{proof}

\section{New infinite families of vectorial bent and plateaued functions}

In this section, using the results from the previous section, we will obtain (at least) three classes of new primary constructions of vectorial bent and vectorial plateaued functions. Amongst those vectorial plateaued functions, there are two classes of functions having the maximal number of bent components. \\

\noindent $A$. \textbf{ New infinite families of vectorial bent functions via Kasami function} \\

Let $n=2k$ be an even positive integer throughout this subsection. Let $G(x)=x^{2^{k}+1}$. It is well known that $G$ is a vectorial bent $(n,k)$-function. The dual of its component $G_{\lambda}(x)={\rm Tr}^{k}_{1}(\lambda G(x))$, for some $\lambda \in \mathbb{F}^{\ast}_{2^{k}}$, is $G^{\ast}_{\lambda}(x)={\rm Tr}^{k}_{1}(\lambda^{-1}x^{2^{k}+1})+1$ (see \cite{Mes14}).

Now, in order to apply Corollary \ref{vecbent}, one has to find a set $\{\rm u_{1}, \ldots, u_{\tau} \}\subseteq \mathbb{F}_{2^{n}}$ such that for all $\lambda \in \mathbb{F}^{\ast}_{2^{k}}$ with ${\rm Tr}^{k}_{1}(\lambda)=1$, $D_{u_{i}}D_{u_{j}}G^{\ast}_{\lambda}(x)={\rm Tr}^{n}_{1}(\lambda^{-1}u_{i}\overline{u}_{j})=0$ for any $1\leq i<j\leq \tau$, where $\overline{u}_{j}:=u^{2^{k}}_{j}$. Note that for any  $\lambda \in \mathbb{F}^{\ast}_{2^{k}}$ with ${\rm Tr}^{k}_{1}(\lambda)=1$, the element $\lambda^{-1}$ can be represented by $v+\overline{v}$ for a unique set $\{v,\overline{v}~|~v\in U\}$, here $U=\{ x\in \mathbb{F}_{2^{n}}~|~x\overline{x}=1\}$. Then $D_{u_{i}}D_{u_{j}}G^{\ast}_{\lambda}(x)={\rm Tr}^{n}_{1}((v+\overline{v})u_{i}\overline{u}_{j})={\rm Tr}^{n}_{1}(v(u_{i}\overline{u}_{j}+\overline{u}_{i}u_{j}))$. Hence $D_{u_{i}}D_{u_{j}}G^{\ast}_{\lambda}(x)=0$ for all $\lambda \in \mathbb{F}^{\ast}_{2^{k}}$ with ${\rm Tr}^{k}_{1}(\lambda)=1$ if and only if ${\rm Tr}^{n}_{1}(v(u_{i}\overline{u}_{j}+\overline{u}_{i}u_{j})=0$ for all $v\in U$. It is easily seen that if $u_{i}\overline{u}_{j}+\overline{u}_{i}u_{j}=0$, i.e., $u_{i}\overline{u}_{j}\in \mathbb{F}_{2^{k}}$, then the conditions of Corollary \ref{vecbent} is automatically satisfied. In particular, let $\{\varrho_{1},\ldots,\varrho_{k}\}$ be a basis of $\mathbb{F}_{2^{k}}$ over $\mathbb{F}_{2}$, and $v\neq 1$ be an element of $U$, set $u_{i}:=\varrho_{i}v,$ $i=1,\ldots, k$,  then we have $u_{i}\overline{u}_{j}\in \mathbb{F}_{2^{k}}$ for any $1\leq i<j\leq k$.

\begin{thm}\label{a} Let $n=2k$ and $\tau$ be positive integers with $1\leq \tau \leq k$. Let $u_{1}, \ldots, u_{k}$ be any $k$ pairwise distinct elements in $\mathbb{F}_{2^{n}}$ such that $u_{i}u^{2^{k}}_{j}\in \mathbb{F}^{\ast}_{2^{k}}$ for any $1\leq i<j\leq k$. Let $F(X_{1},X_{2},\ldots, X_{\tau})$ be any reduced polynomial in $\mathbb{F}_{2}[X_{1},X_{2},\ldots, X_{\tau}]$ with algebraic degree $d$, where $d$ is a nonnegative integer.  Then $H(x)=x^{2^{k}+1}+F({\rm Tr}^{n}_{1}(u_{i_{1}}x), {\rm Tr}^{n}_{1}(u_{i_{2}}x), \ldots,{\rm Tr}^{n}_{1}(u_{i_{\tau}}x))$ is a vectorial bent function, where $\{i_{1},\ldots, i_{\tau} \} \subseteq \{1,\ldots,k\}$. Furthermore, if $u_{i_{1}},\ldots, u_{i_{\tau}}$ are linearly
independent over $\mathbb{F}_{2}$ and $d\geq 2$, then the algebraic degree of $H(x)$ is equal to $d$.
\end{thm}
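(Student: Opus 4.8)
The statement splits into a bentness claim and a degree claim, and the plan is to derive the first directly from Corollary \ref{vecbent} and the second by a hands-on analysis of the components $H_\lambda$.

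For the bentness part, I would take $G(x)=x^{2^{k}+1}$, a vectorial bent $(n,k)$-function whose dual components are $G^{\ast}_{\lambda}(x)={\rm Tr}^{k}_{1}(\lambda^{-1}x^{2^{k}+1})+1$. The task is then to check that for every $\lambda\in\mathbb{F}^{\ast}_{2^{k}}$ with ${\rm Tr}^{k}_{1}(\lambda)=1$, the dual component $G^{\ast}_{\lambda}$ satisfies property $(\mathbf{P}_{\tau})$ with defining set $\{u_{i_{1}},\ldots,u_{i_{\tau}}\}$. Since $x^{2^{k}+1}$ is quadratic, a direct computation of the second derivative (as recorded in the paragraph preceding the theorem) gives $D_{u_{a}}D_{u_{b}}G^{\ast}_{\lambda}(x)={\rm Tr}^{k}_{1}(\lambda^{-1}(u_{a}^{2^{k}}u_{b}+u_{b}^{2^{k}}u_{a}))={\rm Tr}^{n}_{1}(\lambda^{-1}u_{a}u_{b}^{2^{k}})$, where the last equality uses $u_{a}^{2^{k}}u_{b}+u_{a}u_{b}^{2^{k}}={\rm Tr}^{n}_{k}(u_{a}u_{b}^{2^{k}})$ and transitivity of the trace with $\lambda^{-1}\in\mathbb{F}_{2^{k}}$. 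The hypothesis $u_{i}u_{j}^{2^{k}}\in\mathbb{F}^{\ast}_{2^{k}}$ forces $\lambda^{-1}u_{a}u_{b}^{2^{k}}\in\mathbb{F}_{2^{k}}$, and since ${\rm Tr}^{n}_{1}$ vanishes on $\mathbb{F}_{2^{k}}$ (because ${\rm Tr}^{n}_{k}(c)=c+c^{2^{k}}=0$ for $c\in\mathbb{F}_{2^{k}}$), all these second derivatives are zero. As the pairwise condition holds for every pair among $\{1,\ldots,k\}$, it holds in particular for any chosen subset $\{i_{1},\ldots,i_{\tau}\}$, so $G^{\ast}_{\lambda}$ satisfies $(\mathbf{P}_{\tau})$ with the required defining set, and Corollary \ref{vecbent} (with $m=k$) yields that $H$ is vectorial bent.

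For the degree part, I would compute $\deg H=\max_{\lambda\in\mathbb{F}^{\ast}_{2^{k}}}\deg H_{\lambda}$ via the component decomposition of Theorem \ref{cons-1}: writing $g(x)=F({\rm Tr}^{n}_{1}(u_{i_{1}}x),\ldots,{\rm Tr}^{n}_{1}(u_{i_{\tau}}x))$, one has $H_{\lambda}=G_{\lambda}$ when ${\rm Tr}^{k}_{1}(\lambda)=0$ and $H_{\lambda}=G_{\lambda}+g$ when ${\rm Tr}^{k}_{1}(\lambda)=1$. Each $G_{\lambda}={\rm Tr}^{k}_{1}(\lambda x^{2^{k}+1})$ is a nonaffine quadratic bent function, hence of algebraic degree exactly $2$. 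The crucial point is that $\deg g=d$: since $u_{i_{1}},\ldots,u_{i_{\tau}}$ are linearly independent over $\mathbb{F}_{2}$, the linear forms $x\mapsto{\rm Tr}^{n}_{1}(u_{i_{j}}x)$ are linearly independent and extend to an invertible $\mathbb{F}_{2}$-linear change of coordinates, in which $g$ becomes the reduced polynomial $F$ itself; invariance of algebraic degree under invertible linear substitution then gives $\deg g=d$. Because $d\geq 2$, every component satisfies $\deg H_{\lambda}\leq\max(2,d)=d$, so $\deg H\leq d$. For the reverse inequality I split on $d$: if $d>2$, any $\lambda$ with ${\rm Tr}^{k}_{1}(\lambda)=1$ gives $H_{\lambda}=G_{\lambda}+g$ whose degree-$d$ part comes solely from $g$ and cannot be cancelled by the quadratic $G_{\lambda}$, so $\deg H_{\lambda}=d$; if $d=2$, then necessarily $\tau\geq 2$ and hence $k\geq 2$, so $\ker{\rm Tr}^{k}_{1}$ contains a nonzero $\lambda$, for which $H_{\lambda}=G_{\lambda}$ has degree $2=d$. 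In either case $\deg H\geq d$, whence $\deg H=d$.

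I expect the main obstacle to lie in the degree claim rather than the bentness claim. The linear-independence hypothesis is needed precisely to prevent collapse of the top-degree monomials of $F$ under the substitution, and the delicate case is $d=2$, where $g$ and the quadratic $G_{\lambda}$ share the same degree and a naive argument could wrongly allow their quadratic parts to cancel. I sidestep any cancellation by exhibiting a component $H_{\lambda}=G_{\lambda}$ with ${\rm Tr}^{k}_{1}(\lambda)=0$ that is already of degree $2$; this also makes it worth recording carefully that the required $\lambda$'s (with trace $0$ and with trace $1$) exist, which follows from $k\geq 2$ whenever $d\geq 2$.
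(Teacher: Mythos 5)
Your proposal is correct and follows essentially the same route as the paper: the bentness claim is exactly the paper's argument (compute $D_{u_i}D_{u_j}G^{\ast}_{\lambda}(x)={\rm Tr}^{n}_{1}(\lambda^{-1}u_iu_j^{2^k})=0$ from $u_iu_j^{2^k}\in\mathbb{F}_{2^k}^{\ast}$ and invoke Corollary \ref{vecbent}), and the degree claim rests on the same fact the paper cites from Lemma 2 of \cite{Tang17}, namely that linear independence of the $u_{i_j}$ makes the substituted reduced polynomial have degree exactly $d$. Your treatment of the degree endgame is in fact more careful than the paper's one-line proof --- in particular the observation that when $d=2$ one should exhibit a trace-zero component $H_{\lambda}=G_{\lambda}$ of degree $2$ rather than worry about cancellation in $G_{\lambda}+g$ --- and everything you assert checks out.
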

\begin{proof} It need only to show the last assertion. By assumption $u_{i_{1}},\ldots, u_{i_{\tau}}$ are linearly independent over $\mathbb{F}_{2}$, then according to Lemma 2 of \cite{Tang17}, the algebraic degree of $F({\rm Tr}^{n}_{1}(u_{i_{1}}x), {\rm Tr}^{n}_{1}(u_{i_{2}}x),\ldots, {\rm Tr}^{n}_{1}(u_{i_{\tau}}x))$ is equal to $d$.  \end{proof}

\begin{coro}\label{a-vp} Conditions are the same with Theorem \ref{a}. Let $F_{i}(X_{1},\ldots, X_{k})$, $i=1,
\ldots,t$, be any reduced polynomials in $ \mathbb{F}_{2}[X_{1},\ldots,X_{k}]$, for some positive integer $t$. Set $f_{i}(x):=F_{i}({\rm Tr}^{n}_{1}(u_{1}x)), \ldots,{\rm Tr}^{n}_{1}(u_{k}x))$ for each $i=1,\ldots,t$. Then $\widehat{H}(x)=(x^{2^{k}+1},f_{1}(x),\ldots,f_{t}(x))$ is a vectorial plateaued $(n,k+t)$-function if and only if the $(n,t)$-function $(f_{1}(x),\ldots,f_{t}(x))$ is vectorial plateaued.
\end{coro}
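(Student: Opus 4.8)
The plan is to read Corollary~\ref{a-vp} as the instance of the general Corollary~\ref{vecplateaued} in which $G(x)=x^{2^{k}+1}$ (so $m=k$) and $\tau=k$, with defining set $\{u_{1},\ldots,u_{k}\}$. All that must be supplied is a verification that the hypotheses of Corollary~\ref{vecbent} hold in this situation; the equivalence in the statement then transfers from Corollary~\ref{vecplateaued} applied to the reduced polynomials $F_{1},\ldots,F_{t}$.

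First I would recall, from the opening of this subsection, that $G(x)=x^{2^{k}+1}$ is a vectorial bent $(n,k)$-function with component duals $G^{\ast}_{\lambda}(x)={\rm Tr}^{k}_{1}(\lambda^{-1}x^{2^{k}+1})+1$ for $\lambda\in\mathbb{F}^{\ast}_{2^{k}}$. The heart of the matter is the second-order derivative already computed before Theorem~\ref{a}:
\begin{eqnarray*}
D_{u_{i}}D_{u_{j}}G^{\ast}_{\lambda}(x)={\rm Tr}^{n}_{1}(\lambda^{-1}u_{i}u_{j}^{2^{k}})={\rm Tr}^{k}_{1}\big(\lambda^{-1}(u_{i}u_{j}^{2^{k}}+u_{i}^{2^{k}}u_{j})\big).
\end{eqnarray*}
Since the hypothesis $u_{i}u_{j}^{2^{k}}\in\mathbb{F}^{\ast}_{2^{k}}$ forces $u_{i}u_{j}^{2^{k}}+u_{i}^{2^{k}}u_{j}=u_{i}u_{j}^{2^{k}}+(u_{i}u_{j}^{2^{k}})^{2^{k}}=0$, this derivative vanishes for \emph{every} $\lambda\in\mathbb{F}^{\ast}_{2^{k}}$, not merely for those with ${\rm Tr}^{k}_{1}(\lambda)=1$. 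Hence each $G^{\ast}_{\lambda}$ satisfies property $(\mathbf{P}_{k})$ with the common defining set $\{u_{1},\ldots,u_{k}\}$, which is precisely the input demanded by Corollary~\ref{vecbent} (indeed, something stronger).

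With this in hand I would run the component analysis underlying Corollary~\ref{vecplateaued}. The components of $\widehat{H}$ are labelled by $(\lambda,v)\in(\mathbb{F}_{2^{k}}\times\mathbb{F}^{t}_{2})\setminus\{(0,0)\}$ and equal $G_{\lambda}+\langle v,(f_{1},\ldots,f_{t})\rangle$. When $\lambda\neq 0$, the polynomial $\langle v,(F_{1},\ldots,F_{t})\rangle$ is again reduced, and since $G^{\ast}_{\lambda}$ has property $(\mathbf{P}_{k})$, Theorem~\ref{tang} shows this component is bent; this is exactly where the strengthening to all $\lambda\neq 0$ is used, as $\lambda$ here ranges over all of $\mathbb{F}^{\ast}_{2^{k}}$. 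The remaining components, with $\lambda=0$ and $v\neq 0$, equal $\langle v,(f_{1},\ldots,f_{t})\rangle$, which are plateaued precisely when the $(n,t)$-function $(f_{1},\ldots,f_{t})$ is vectorial plateaued. Because bent functions are in particular plateaued, $\widehat{H}$ is vectorial plateaued if and only if $(f_{1},\ldots,f_{t})$ is, which is the claim.

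I do not anticipate a serious obstacle, as this is a transfer of Corollary~\ref{vecplateaued}; the one point demanding care is that $\widehat{H}$ has components indexed by all $\lambda\in\mathbb{F}^{\ast}_{2^{k}}$, so the bentness of the $\lambda\neq 0$ part requires property $(\mathbf{P}_{k})$ for \emph{every} such $\lambda$, rather than only the trace-one $\lambda$ literally asked for in Corollary~\ref{vecbent}. The field-membership hypothesis $u_{i}u_{j}^{2^{k}}\in\mathbb{F}^{\ast}_{2^{k}}$ delivers exactly this uniformity, independently of $\lambda$, so the argument closes cleanly.
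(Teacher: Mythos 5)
Your proposal is correct and follows essentially the same route as the paper, whose proof is simply the one-line citation of Corollary~\ref{vecplateaued} together with Theorem~\ref{a}. The extra care you take in checking that $D_{u_i}D_{u_j}G^{\ast}_{\lambda}=0$ for \emph{all} $\lambda\in\mathbb{F}^{\ast}_{2^{k}}$ (not only the trace-one ones literally required by Corollary~\ref{vecbent}) is a worthwhile clarification of a point the paper glosses over, since the components of $\widehat{H}$ with $\lambda\neq 0$ and $v\neq 0$ genuinely need bentness for every nonzero $\lambda$; the hypothesis $u_iu_j^{2^k}\in\mathbb{F}^{\ast}_{2^{k}}$ supplies this uniformly, exactly as you argue.
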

\begin{proof} This can be seen directly from Corollary \ref{vecplateaued} and Theorem \ref{a}.
\end{proof}

 It is important and interesting to estimate the number of the bent components of $\widehat{H}(x)$. Note that for any $(u,v)\in \mathbb{F}^{\ast}_{2^{k}}\times \mathbb{F}^{m}_{2}$, $\langle(u,v), \widehat{H}(x)\rangle={\rm Tr}^{k}_{1}(ux^{2^{k}+1})+\langle v,(f_{1}(x),\ldots,f_{t}(x)) \rangle$. Therefore, by the fact that $\langle v,(F_{1},\ldots,F_{t}) \rangle$ is also a polynomial over $\mathbb{F}_{2}$ with the variables $X_{1},\ldots, X_{k}$, we have by Theorem \ref{a}, $\langle(u,v), \widehat{H}\rangle$ is bent for any $u\neq 0$. It means that $\widehat{H}(x)$ has at least $2^{t+k}-2^{t}$ bent components. It is not hard to prove that (or see \cite[Theorem 3.2]{ZP18}), the maximal number of bent components for a $(2k,k+t)$-function is $2^{t+k}-2^{t}$. Therefore, $\widehat{H}(x)$ has $2^{t+k}-2^{t}$ bent components, and $\langle(u,v), \widehat{H}(x)\rangle$ is bent if and only if $u\neq 0$. We in fact have proved the following corollary.

 \begin{coro}\label{a-vm} With the same notations in Corollary \ref{a-vp}. For any $(u,v)\in \mathbb{F}^{\ast}_{2^{k}}\times \mathbb{F}^{t}_{2}$,  $\langle(u,v), \widehat{H}(x)\rangle$ is bent if and only if $u\neq 0$. In particular, $\widehat{H}(x)$ is an $(n,t+k)$-function with the maximal number of bent components, and for any $v\in  \mathbb{F}^{t}_{2}$, $\langle v,(f_{1}(x),\ldots,f_{t}(x)) \rangle$ is not a bent function.
 \end{coro}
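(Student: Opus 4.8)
We need to show two things about $\widehat{H}(x)=(x^{2^{k}+1},f_{1}(x),\ldots,f_{t}(x))$, under the hypotheses inherited from Corollary~\ref{a-vp}: first, that for $(u,v)\in\mathbb{F}^{\ast}_{2^{k}}\times\mathbb{F}^{t}_{2}$ the component $\langle(u,v),\widehat{H}(x)\rangle$ is bent exactly when $u\neq 0$; and second, that consequently $\widehat{H}$ realizes the maximal number of bent components and no nonzero combination of the $f_i$ alone is bent.

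**The approach and main steps.**

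The plan is to split the component analysis according to whether the $\mathbb{F}_{2^k}$-part $u$ vanishes. First I would write out the component explicitly: for $(u,v)\in\mathbb{F}^{\ast}_{2^{k}}\times\mathbb{F}^{t}_{2}$ one has $\langle(u,v),\widehat{H}(x)\rangle={\rm Tr}^{k}_{1}(ux^{2^{k}+1})+\langle v,(f_{1},\ldots,f_{t})\rangle$. The key observation is that $\langle v,(F_{1},\ldots,F_{t})\rangle$ is again a reduced polynomial in $\mathbb{F}_{2}[X_{1},\ldots,X_{k}]$, so when $u\neq 0$ the component has precisely the form handled by Theorem~\ref{a} (with $G(x)=x^{2^k+1}$ and the full defining set $\{u_1,\ldots,u_k\}$), and is therefore bent. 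This disposes of the ``if'' direction directly. For the ``only if'' direction, when $u=0$ the component reduces to $\langle v,(f_1,\ldots,f_t)\rangle$, which depends on $x$ only through the $k$ linear forms ${\rm Tr}^{n}_{1}(u_ix)$; such a function factors through an $\mathbb{F}_2$-linear map of rank at most $k$, hence cannot be bent on $\mathbb{F}_{2^n}$ since a bent function in $n=2k$ variables must depend genuinely on all $n$ coordinates (equivalently, its Walsh support would force full rank). This shows $u\neq 0$ is necessary for bentness.

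**Counting and the maximality conclusion.**

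Having established the exact bent criterion, I would count: the bent components are indexed by $(u,v)$ with $u\in\mathbb{F}^{\ast}_{2^{k}}$ arbitrary and $v\in\mathbb{F}^{t}_{2}$ arbitrary, giving $(2^{k}-1)\cdot 2^{t}=2^{t+k}-2^{t}$ bent components. By \cite[Theorem~3.2]{ZP18}, a $(2k,k+t)$-function has at most $2^{t+k}-2^{t}$ bent components, so $\widehat{H}$ attains this maximum. Finally, the statement that $\langle v,(f_1,\ldots,f_t)\rangle$ is never bent for $v\in\mathbb{F}^{t}_{2}$ is just the contrapositive of the necessity part: these are exactly the components with $u=0$, which we have shown are non-bent.

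**Expected obstacle.**

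The routine part is the ``if'' direction and the counting, which follow immediately from Theorem~\ref{a} and the quoted bound. The only genuinely delicate point is the necessity argument ($u=0\Rightarrow$ not bent): one must argue cleanly that a Boolean function on $\mathbb{F}_{2^n}$ obtained by composing a function on $\mathbb{F}_2^k$ with $k<n$ linear forms cannot be bent. I would justify this by noting that such a function is constant on every coset of the $n-k$ dimensional subspace $\{x:{\rm Tr}^{n}_{1}(u_ix)=0,\ i=1,\ldots,k\}$, so it possesses a nonzero linear structure; a bent function has no nonzero linear structure, giving the contradiction. This is the one place where care is needed, but the rest of the corollary is an immediate assembly of the preceding results.
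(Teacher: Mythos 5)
Your proof is correct, and its first half coincides with the paper's: you write $\langle(u,v),\widehat H(x)\rangle={\rm Tr}^k_1(ux^{2^k+1})+\langle v,(f_1,\ldots,f_t)\rangle$, note that $\langle v,(F_1,\ldots,F_t)\rangle$ is again a reduced polynomial in $\mathbb{F}_2[X_1,\ldots,X_k]$, and invoke Theorem \ref{a} to get bentness whenever $u\neq 0$, then count $(2^k-1)2^t=2^{t+k}-2^t$ bent components and match this against the upper bound of \cite[Theorem 3.2]{ZP18}. Where you genuinely diverge is the necessity direction. The paper never argues directly that the $u=0$ components are non-bent; it deduces this purely from the counting: since the components with $u\neq 0$ already attain the maximal possible number $2^{t+k}-2^t$ of bent components, no component with $u=0$ can be bent. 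You instead give a direct structural argument: $\langle v,(f_1,\ldots,f_t)\rangle$ factors through the $k$ linear forms ${\rm Tr}^n_1(u_ix)$, hence is constant on the cosets of the subspace $\{x:{\rm Tr}^n_1(u_ix)=0,\ i=1,\ldots,k\}$, which has dimension at least $n-k=k\geq 1$; any nonzero element of that subspace is a linear structure, and a bent function admits no nonzero linear structure. This is a correct and standard argument (for $a\neq 0$ with $D_af=c$ constant one gets $W_f(b)=(-1)^{{\rm Tr}^n_1(ab)+c}W_f(b)$, forcing $W_f(b)=0$ for some $b$). The trade-off: your route makes the equivalence self-contained, with the bound from \cite{ZP18} needed only to certify maximality, while the paper's route is shorter but makes the non-bentness of the $u=0$ components logically dependent on that external bound. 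Both are valid.
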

\begin{rmk} With the same notations in Corollary \ref{a-vp}. Let $\widehat{H}(x)=(x^{2^{k}+1},{\rm Tr}^{n}_{1}(u_{1}x),{\rm Tr}^{n}_{1}(u_{1}x){\rm Tr}^{n}_{1}(u_{2}x),$ $\ldots, \prod\limits^{k}_{i=1}{\rm Tr}^{n}_{1}(u_{i}x))$. If $u_{1},\ldots, u_{k}$ are linearly independent over $\mathbb{F}_{2}$, then $\widehat{H}(x)$ is an $(n,n)$-function of algebraic degree $k$, and has the maximal number of bent components in the sense of \cite[Theorem 3.2]{ZP18}, see also \cite[Theorem 2]{Pott18}. It is interesting to investigate its cryptographic properties such as APN-ness etc. This will be the topic of our future work.
\end{rmk}

\noindent $B$. \textbf{ New infinite families of vectorial bent functions from Niho exponents}\\

Throughout this subsection, $n=2k$ is an even integer, and $\tau$ is a positive integer such that $1\leq \tau \leq k$. For any $a$ in $\mathbb{F}_{2^{n}}$, denote $a^{2^{k}}$ by $\overline{a}$. Consider now the $(n,n)$-function
$$G(x)=\sum\limits^{2^{r}-1}_{i=1}x^{(i2^{k-r}+1)(2^{k}-1)+1}$$ with $1<r<k$ and ${\rm gcd}(r,k)=1$, then by \cite[Theorem 2]{Li13}, for any $a\in \mathbb{F}_{2^{n}}$, $G_{a}(x)={\rm Tr}^{n}_{1}(aG(x))$ is bent if $a+\overline{a}\neq 0$. In the following, we first show that $G(x)$ is actually a vectorial bent $(n,k)$-function, and then use it to generate new vectorial bent $(n,k)$-functions of the form \eqref{cons1}.

\begin{prop} Let $n=2k$, $r$ be positive integers such that ${\rm gcd}(r,k)=1$. Then the $(n,k)$-function $G(x)=\sum\limits^{2^{r}-1}_{i=1}x^{(i2^{k-r}+1)(2^{k}-1)+1}$ is vectorial bent.
\end{prop}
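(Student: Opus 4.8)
The plan is to prove the proposition in two stages: first confirm that $G$ genuinely takes its values in $\mathbb{F}_{2^{k}}$, so that it really is an $(n,k)$-function and its components $G_{\lambda}(x)={\rm Tr}^{k}_{1}(\lambda G(x))$, $\lambda\in\mathbb{F}^{\ast}_{2^{k}}$, are the objects whose bentness must be checked; and then reduce the vectorial bentness of $G$ to the scalar result of \cite[Theorem 2]{Li13}, which a priori only tells us that ${\rm Tr}^{n}_{1}(aG(x))$ is bent for $a\notin\mathbb{F}_{2^{k}}$, i.e. for $a+\overline{a}\neq 0$.

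For the first stage I would write $d_{i}=(i2^{k-r}+1)(2^{k}-1)+1$ and expand it modulo $2^{n}-1=2^{2k}-1$ as $d_{i}\equiv i2^{2k-r}-i2^{k-r}+2^{k}$. Applying the Frobenius $x\mapsto x^{2^{k}}=\overline{x}$ multiplies each exponent by $2^{k}$; using $2^{2k}\equiv 1$ one finds $2^{k}d_{i}\equiv 1-i2^{2k-r}+i2^{k-r}\pmod{2^{2k}-1}$, and a direct check shows this is exactly $d_{2^{r}-i}$. Since $i\mapsto 2^{r}-i$ is a bijection of $\{1,\dots,2^{r}-1\}$, the Frobenius permutes the monomials of $G$, whence $G(x)^{2^{k}}=\sum_{i=1}^{2^{r}-1}x^{d_{2^{r}-i}}=G(x)$, that is $\overline{G(x)}=G(x)$. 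This shows $G(x)\in\mathbb{F}_{2^{k}}$ for all $x$, so $G$ is indeed an $(n,k)$-function.

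For the second stage I would exploit that, for a function valued in $\mathbb{F}_{2^{k}}$, transitivity and $\mathbb{F}_{2^{k}}$-linearity of the trace give, for every $a\in\mathbb{F}_{2^{n}}$, ${\rm Tr}^{n}_{1}(aG(x))={\rm Tr}^{k}_{1}\big({\rm Tr}^{n}_{k}(aG(x))\big)={\rm Tr}^{k}_{1}\big((a+\overline{a})G(x)\big)=G_{a+\overline{a}}(x)$, where $G(x)\in\mathbb{F}_{2^{k}}$ is pulled out of ${\rm Tr}^{n}_{k}$. The map $a\mapsto a+\overline{a}={\rm Tr}^{n}_{k}(a)$ is surjective onto $\mathbb{F}_{2^{k}}$ and vanishes precisely on $\mathbb{F}_{2^{k}}$; hence every $\lambda\in\mathbb{F}^{\ast}_{2^{k}}$ can be written as $\lambda=a+\overline{a}$ with $a\notin\mathbb{F}_{2^{k}}$. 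For such $a$, \cite[Theorem 2]{Li13} gives that ${\rm Tr}^{n}_{1}(aG(x))$ is bent, and by the displayed identity this equals the component $G_{\lambda}(x)$. Therefore all components $G_{\lambda}$, $\lambda\in\mathbb{F}^{\ast}_{2^{k}}$, are bent, so $G$ is vectorial bent.

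The only genuinely computational obstacle is the exponent bookkeeping in the first stage, namely verifying $2^{k}d_{i}\equiv d_{2^{r}-i}\pmod{2^{2k}-1}$; once the image of $G$ is pinned to $\mathbb{F}_{2^{k}}$, the remainder is the standard subfield-trace reduction (the philosophy behind \cite{Pott18,ZP18}) and goes through cleanly. I would also double-check the fixed point $i=2^{r-1}$ of $i\mapsto 2^{r}-i$, which is self-paired and causes no trouble.
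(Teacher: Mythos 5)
Your proof is correct and follows essentially the same route as the paper's: both establish $G(x)\in\mathbb{F}_{2^{k}}$ by pairing the exponents $d_{i}$ and $d_{2^{r}-i}$ under the Frobenius $x\mapsto x^{2^{k}}$ (with $i=2^{r-1}$ as the self-paired term $x^{(2^{k}+1)2^{k-1}}$), and both then reduce vectorial bentness to the scalar bentness of ${\rm Tr}^{n}_{1}(aG(x))$ for $a\notin\mathbb{F}_{2^{k}}$ given by \cite[Theorem 2]{Li13}. The only cosmetic difference is in the reduction step: you identify each component $G_{\lambda}$ directly with ${\rm Tr}^{n}_{1}(aG(x))$ via the surjection $a\mapsto a+\overline{a}$, whereas the paper routes through \cite[Proposition 3]{Pott18} and then specializes to an $a$ with $a+\overline{a}=1$; your version is more self-contained but mathematically equivalent.
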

\begin{proof} By \cite[Theorem~2]{Li13}, ${\rm Tr}^{n}_{1}(aG(x))$ is bent if $a+\overline{a}\neq 0$, that is,  $a\not\in \mathbb{F}_{2^{k}}$. Then according to \cite[Proposition 3]{Pott18}, ${\rm Tr}^{n}_{k}(aG(x))$ is a vectorial bent $(n,k)$-function for any $a\not\in \mathbb{F}_{2^{k}}$. Thus the assertion will become true if we can show that $G(x)\in \mathbb{F}_{2^{k}}$ for all $x\in \mathbb{F}_{2^{n}}$. Since if this is the case, let $a\in \mathbb{F}_{2^{n}}$ such that $a+\overline{a}=1$, then ${\rm Tr}^{n}_{k}(aG(x))=G(x){\rm Tr}^{n}_{k}(a)=G(x)$ is a vectorial bent $(n,k)$-function.

Indeed, let $d_{i}=(2^{k}-1)s_{i}+1$ with $s_{i}=i2^{k-r}+1$ for $i=1,\ldots, 2^{r}-1,$ then for any $1\leq i<j\leq 2^{r}-1$ such that $i+j=2^{r}$, it holds $d_{j}\equiv d_{i}\cdot2^{k}~({\rm mod}~2^{n}-1),$ and hence
\begin{eqnarray*}G(x)&=&x^{(2^{k}+1)2^{k-1}} +\sum\limits^{2^{r}-1}_{i=1,i\neq 2^{r-1}}x^{(i2^{k-r}+1)(2^{k}-1)+1}\\
&=&(x^{(2^{k}+1)2^{k-1}} +\sum\limits^{2^{r-1}-1}_{i=1}(x^{d_{i}}+ x^{d_{i}\cdot 2^{k}}))~({\rm mod}~x^{2^{n}}+x).
 \end{eqnarray*} Now it is easy to see that $G$ is an $(n,k)$-function, since $x^{(2^{k}+1)2^{k-1}}, ~x^{d_{i}}+ x^{d_{i}\cdot 2^{k}}={\rm Tr}^{n}_{k}(x^{d_{i}}) \in \mathbb{F}_{2^{k}}$ for any $x\in \mathbb{F}_{2^{n}}$ and $1 \leq i\leq 2^{r-1}-1$.
\end{proof}

Considering the vectorial bent $(n,k)$-function $G(x)$ described above, for any $\lambda \in \mathbb{F}^{\ast}_{2^{k}}$, we have $G_{\lambda}(x)={\rm Tr}^{k}_{1}(\lambda G(x))={\rm Tr}^{k}_{1}(\lambda x^{(2^{k}+1)2^{k-1}}) + {\rm Tr}^{n}_{1}(\lambda \sum\limits^{2^{r-1}-1}_{i=1}x^{(i2^{k-r}+1)(2^{k}-1)+1})$. In order to construct new vectorial bent functions of the form \eqref{cons1}, one has to calculate the dual $G^{\ast}_{\lambda}(x)$ for each $\lambda \in \mathbb{F}^{\ast}_{2^{k}}$.

 Let $\lambda=1$, then $G_{\lambda}(x)$ $={\rm Tr}^{k}_{1}(x^{(2^{k}+1)2^{k-1}})$ +
${\rm Tr}^{n}_{1}( \sum\limits^{2^{r-1}-1}_{i=1}x^{(i2^{k-r}+1)(2^{k}-1)+1})$ which is exactly  the Leander-Kholosha's class of bent functions~(see \cite{Leander06}). Take any $u\in \mathbb{F}_{2^{n}}$ with $u+\overline{u}=1$. Then it has been shown  the dual function $G^{\ast}_{1}(x)$ of $G_{1}(x)$ is given by $$G^{\ast}_{1}(x)={\rm Tr}^{k}_{1}((u(1+x+x^{2^{k}})+u^{2^{n-r}}+x^{2^{k}})(1+x+x^{2^{k}})^{1/(2^{r}-1)}),$$
where $1/(2^{r}-1)$ is interpreted modulo $2^{k}-1$, say it is a positive integer $s$ such that $(2^{r}-1)\cdot s\equiv 1~({\rm mod}~2^k-1)$ (see \cite[Theorem~ 1]{Buda12}). Let $t=2^{r-1}-1$, $d_{t}=(2^{k}-1)(t2^{k-r}+1)+1$. In \cite[Proposition 3]{Li13}, the authors have shown that ${\rm gcd}(d_{t},2^{n}-1)=1$, and for each $\lambda \in \mathbb{F}^{\ast}_{2^{k}}$,  there exists a unique element $\delta\in \mathbb{F}_{2^{n}}$ such that $\lambda=\delta^{d_{t}}$, and $G_{\lambda}(x)=G_{1}(\delta x).$ Here, one can see that $\delta\in \mathbb{F}_{2^{k}}$.

Now we are in position to give the dual $G^{\ast}_{\lambda}(x)$ of $G_{\lambda}(x)$ for each $\lambda \in \mathbb{F}^{\ast}_{2^{k}}$.

\begin{prop}\label{Li} Let $G(x)$ be the vectorial bent $(n,k)$-function described above. Then $G^{\ast}_{\lambda}(x)=G^{\ast}_{1}(\delta^{-1}x)$ for each $\lambda \in \mathbb{F}^{\ast}_{2^{k}}$, where $\delta$ is the unique element in $\mathbb{F}_{2^{n}}$ such that $\lambda=\delta^{d_{t}}$, $d_{t}=2^{r-1}-1$. In particular, for any $a,b\in \mathbb{F}^{\ast}_{2^{k}}$, $D_{a}D_{b}G^{\ast}_{\lambda}(x)=0.$
\end{prop}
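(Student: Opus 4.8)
The plan is to establish the dual formula first and then read off the vanishing of the second-order derivative from it. For the dual formula I would use the identity $G_{\lambda}(x)=G_{1}(\delta x)$ recalled just above the proposition, together with the fact (also established there) that $\delta\in\mathbb{F}^{\ast}_{2^{k}}\subseteq\mathbb{F}^{\ast}_{2^{n}}$. Substituting $y=\delta x$ in the Walsh transform, which is a bijection of $\mathbb{F}_{2^{n}}$ since $\delta\neq 0$, and using that field multiplication is self-adjoint for the form ${\rm Tr}^{n}_{1}(\cdot\,)$, one gets
\begin{eqnarray*}
W_{G_{\lambda}}(a) &=& \sum_{x\in\mathbb{F}_{2^{n}}}(-1)^{G_{1}(\delta x)+{\rm Tr}^{n}_{1}(ax)} \\
&=& \sum_{y\in\mathbb{F}_{2^{n}}}(-1)^{G_{1}(y)+{\rm Tr}^{n}_{1}(\delta^{-1}a\,y)}=W_{G_{1}}(\delta^{-1}a).
\end{eqnarray*}
Since $G_{1}$ is bent, $W_{G_{1}}(c)=2^{k}(-1)^{G^{\ast}_{1}(c)}$, so comparing with $W_{G_{\lambda}}(a)=2^{k}(-1)^{G^{\ast}_{\lambda}(a)}$ yields $G^{\ast}_{\lambda}(x)=G^{\ast}_{1}(\delta^{-1}x)$, which is the first assertion.

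For the second assertion I would first reduce it to a statement about $G^{\ast}_{1}$ alone. Since $G^{\ast}_{\lambda}(x)=G^{\ast}_{1}(\delta^{-1}x)$ with $\delta^{-1}\in\mathbb{F}^{\ast}_{2^{k}}$, a direct substitution gives $D_{a}D_{b}G^{\ast}_{\lambda}(x)=(D_{\delta^{-1}a}D_{\delta^{-1}b}G^{\ast}_{1})(\delta^{-1}x)$, and $\delta^{-1}a,\delta^{-1}b$ again lie in $\mathbb{F}^{\ast}_{2^{k}}$; hence it suffices to prove $D_{a}D_{b}G^{\ast}_{1}=0$ for all $a,b\in\mathbb{F}^{\ast}_{2^{k}}$. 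To do so I would isolate, in the displayed dual formula, the dependence of $G^{\ast}_{1}$ on $x$. Writing $z=1+x+x^{2^{k}}=1+{\rm Tr}^{n}_{k}(x)\in\mathbb{F}_{2^{k}}$ and $s=1/(2^{r}-1)$, the trace argument is $(uz+u^{2^{n-r}}+x^{2^{k}})z^{s}$, in which $x$ enters only through $z$ and through the single linear term $x^{2^{k}}$. For $a\in\mathbb{F}_{2^{k}}$ one has ${\rm Tr}^{n}_{k}(x+a)={\rm Tr}^{n}_{k}(x)$, hence $z(x+a)=z(x)$, while $(x+a)^{2^{k}}=x^{2^{k}}+a$. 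Thus translation by $a$ fixes $z$, and a fortiori $z^{s}$, and changes the argument by $a\,z^{s}$, so by $\mathbb{F}_{2}$-linearity of the trace
\[
D_{a}G^{\ast}_{1}(x)=G^{\ast}_{1}(x)+G^{\ast}_{1}(x+a)={\rm Tr}^{k}_{1}(a\,z(x)^{s}).
\]
This first-order derivative depends on $x$ only through $z(x)$, which is itself invariant under translation by any $b\in\mathbb{F}_{2^{k}}$; therefore $D_{a}G^{\ast}_{1}(x+b)=D_{a}G^{\ast}_{1}(x)$, i.e.\ $D_{b}D_{a}G^{\ast}_{1}=0$, as required.

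The main obstacle is the middle computation: correctly separating, in the explicit dual of Budaghyan et al., the part of the argument depending on $x$ only through the $\mathbb{F}_{2^{k}}$-valued quantity $z=1+x+x^{2^{k}}$ from the part that is genuinely affine in $x^{2^{k}}$. Once this split is in place the rest is automatic, since $z$ is fixed by every translation from $\mathbb{F}_{2^{k}}$ (because ${\rm Tr}^{n}_{k}(a)=0$ and $a^{2^{k}}=a$ for $a\in\mathbb{F}_{2^{k}}$). The only point needing a small check is that the trace argument, and hence each derivative, stays $\mathbb{F}_{2^{k}}$-valued so that ${\rm Tr}^{k}_{1}$ is legitimately applied throughout; this is guaranteed by $u+u^{2^{k}}=1$, which forces $uz+u^{2^{n-r}}+x^{2^{k}}\in\mathbb{F}_{2^{k}}$.
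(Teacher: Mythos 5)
Your proof is correct and follows the same overall route as the paper's. The first assertion is obtained by exactly the same change of variable $y=\delta x$ in the Walsh transform, and your reduction of $D_{a}D_{b}G^{\ast}_{\lambda}=0$ to the corresponding statement for $G^{\ast}_{1}$ (using that $\delta\in\mathbb{F}^{\ast}_{2^{k}}$, so that $\delta^{-1}a,\delta^{-1}b$ stay in $\mathbb{F}^{\ast}_{2^{k}}$) is also identical to what the paper does. The one place you genuinely diverge is the key fact $D_{a}D_{b}G^{\ast}_{1}=0$ for $a,b\in\mathbb{F}^{\ast}_{2^{k}}$: the paper disposes of this by citing the proof of Theorem~11 of \cite{Mes14}, whereas you rederive it from the explicit formula of \cite{Buda12} for $G^{\ast}_{1}$, by noting that $z=1+x+x^{2^{k}}$ is fixed by every translation from $\mathbb{F}_{2^{k}}$ while the remaining occurrence of $x$ (through $x^{2^{k}}$) is merely shifted by $a$, so that $D_{a}G^{\ast}_{1}(x)={\rm Tr}^{k}_{1}\bigl(a\,z(x)^{s}\bigr)$ depends on $x$ only through $z(x)$ and is therefore annihilated by a further derivative in a direction from $\mathbb{F}_{2^{k}}$. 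This computation is correct (your side-check that the trace argument lies in $\mathbb{F}_{2^{k}}$, using $u+u^{2^{k}}=1$ and $z\in\mathbb{F}_{2^{k}}$, also goes through), and it buys a self-contained argument at the cost of depending on the exact form of the dual; the paper's citation is shorter but leaves that step opaque.
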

\begin{proof} We begin our proof from two bent functions $g,h\in \mathcal{B}_{n}$ satisfying that $g(x)=h(\delta x)$ for some $\delta\in \mathbb{F}^{\ast}_{2^{n}}$. We will obtain that $h^{\ast}(x)=g^{\ast}(\delta x)$, and then the first assertion holds true when we take $g(x)=G_{\lambda}(x), h(x)=G_{1}(x).$ Let $a\in \mathbb{F}_{2^{n}}$, we have
\begin{eqnarray*}W_{h}(a) &\!=\!& \sum\limits_{y\in \mathbb{F}_{2^{n}}}(-1)^{h(y)+{\rm Tr}^{n}_{1}(ay)}\\
             &\!=\!&  \sum\limits_{x\in \mathbb{F}_{2^{n}}}(-1)^{h(\delta x)+{\rm Tr}^{n}_{1}(a\delta x)}\\
             &\!=\!&\sum\limits_{x\in \mathbb{F}_{2^{n}}}(-1)^{g(x)+{\rm Tr}^{n}_{1}(a\delta x)}\\
             &\!=\!& W_{g}(a\delta).
             \end{eqnarray*}
It follows that $2^{k}(-1)^{h^{\ast}(a)}=2^{k}(-1)^{g^{\ast}(\delta a)}$ for any $a\in \mathbb{F}_{2^{n}}$, and hence $h^{\ast}(a)=g^{\ast}(\delta a)$. Note that by the proof of Theorem 11 in \cite{Mes14}, one has $D_{a}D_{b}G^{\ast}_{1}(x)=0$ for any $a,b\in \mathbb{F}^{\ast}_{2^{k}}$. Then  $D_{a}D_{b}G^{\ast}_{\lambda}(x)=D_{a}D_{b}G^{\ast}_{1}(\delta^{-1}x)=D_{\delta^{-1}a}D_{\delta^{-1}b}G^{\ast}_{1}(y)=0$ with $y=\delta^{-1}x$, since $\delta \in \mathbb{F}^{\ast}_{2^{k}}$ by the fact $\lambda \in \mathbb{F}^{\ast}_{2^{k}}$, $\lambda=\delta^{d_{t}}$, gcd$(d_{t}, 2^n-1)=1$.   \end{proof}

\begin{thm}\label{b} Let $\{u_{1}, \ldots, u_{k}\}$ be a basis of~~$\mathbb{F}_{2^{k}}$ over $\mathbb{F}_{2}$, and $G(x)=\sum\limits^{2^{r}-1}_{i=1}x^{(i2^{k-r}+1)(2^{k}-1)+1}$ with $r>1$, $gcd(r,k)=1$. Let $F(X_{1},X_{2},\ldots, X_{\tau})$ be any reduced polynomial in $\mathbb{F}_{2}[X_{1},X_{2},\ldots, X_{\tau}]$ with algebraic degree $d$.  Then $H(x)=G(x)+F({\rm Tr}^{n}_{1}(u_{i_{1}}x), {\rm Tr}^{n}_{1}(u_{i_{2}}x), \ldots,{\rm Tr}^{n}_{1}(u_{i_{\tau}}x))$ is a vectorial bent $(n,k)$-function, where $\{i_{1},\ldots, i_{\tau} \} \subseteq \{1,\ldots,k\}$. Furthermore, if $d=k$, and the algebraic degree of $G$ is not equal to $k$, then $H(x)$ has algebraic degree $k$.
\end{thm}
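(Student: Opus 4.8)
The plan is to obtain the first (vectorial bent) assertion as an immediate consequence of Corollary \ref{vecbent} together with Proposition \ref{Li}, and then to settle the algebraic degree by a separate top-degree argument. Since the bentness is already packaged in the earlier results, the only genuine input for that part is to check that the chosen $u_{i_p}$ lie in $\mathbb{F}^*_{2^k}$ so that Proposition \ref{Li} applies.

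First I would verify the hypotheses of Corollary \ref{vecbent} with $m=k$. Because $\{u_1,\ldots,u_k\}$ is a basis of $\mathbb{F}_{2^k}$ over $\mathbb{F}_2$, every $u_{i_p}$ lies in $\mathbb{F}^*_{2^k}$, so for any two chosen indices one has $u_{i_p},u_{i_q}\in\mathbb{F}^*_{2^k}$. By Proposition \ref{Li}, for each $\lambda\in\mathbb{F}^*_{2^k}$ the identity $D_aD_bG^*_\lambda(x)=0$ holds for all $a,b\in\mathbb{F}^*_{2^k}$; taking $a=u_{i_p}$, $b=u_{i_q}$ shows $D_{u_{i_p}}D_{u_{i_q}}G^*_\lambda(x)=0$ for every $1\le p<q\le\tau$. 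Hence $G^*_\lambda$ satisfies property $(\mathbf{P}_{\tau})$ with defining set $\{u_{i_1},\ldots,u_{i_\tau}\}$ for every $\lambda\in\mathbb{F}^*_{2^k}$, in particular for those with $\mathrm{Tr}^k_1(\lambda)=1$. Corollary \ref{vecbent} then yields at once that $H(x)$ is a vectorial bent $(n,k)$-function.

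For the degree assertion I would argue componentwise, exactly as in the proof of Theorem \ref{cons-1}. Writing $g(x)=F(\mathrm{Tr}^n_1(u_{i_1}x),\ldots,\mathrm{Tr}^n_1(u_{i_\tau}x))$, one has $H_\lambda(x)=G_\lambda(x)$ when $\mathrm{Tr}^k_1(\lambda)=0$ and $H_\lambda(x)=G_\lambda(x)+g(x)$ when $\mathrm{Tr}^k_1(\lambda)=1$, so that $\deg H=\max_\lambda\deg H_\lambda$. Since $\{u_{i_1},\ldots,u_{i_\tau}\}$ is a subset of a basis it is linearly independent over $\mathbb{F}_2$, whence Lemma 2 of \cite{Tang17} gives $\deg g=d=k$ (note that $d\le\tau\le k$ forces $\tau=k$ here). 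On the other hand each component $G_\lambda$ is bent in $n=2k$ variables, so by the classical bound of \cite{Rothaus} its degree is at most $n/2=k$; combined with the hypothesis $\deg G\neq k$ this gives $\deg G=\max_\lambda\deg G_\lambda<k$. Consequently, for any $\lambda$ with $\mathrm{Tr}^k_1(\lambda)=1$ the degree-$k$ part of $H_\lambda=G_\lambda+g$ comes solely from $g$ and cannot cancel, so $\deg H_\lambda=k$, while every component has degree at most $k$. Therefore $\deg H=k$.

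I do not expect a serious obstacle: the vectorial bentness is essentially a citation of Corollary \ref{vecbent} and Proposition \ref{Li}, so the statement reduces to the last assertion. The only point demanding care is the degree computation, where one must (i) track precisely which components receive the summand $g$, and (ii) use the bound $\deg G\le n/2$ to guarantee that the top-degree term of $g$ survives the addition without cancellation; both are routine once the structure $H_\lambda=G_\lambda+\mathrm{Tr}^k_1(\lambda)g$ is made explicit.
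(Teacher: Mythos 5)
Your proposal is correct and follows essentially the same route as the paper: the vectorial bentness is obtained by verifying property $(\mathbf{P}_{\tau})$ via Proposition \ref{Li} (since the $u_{i_p}$ lie in $\mathbb{F}^{\ast}_{2^{k}}$) and invoking Corollary \ref{vecbent}, and the degree claim follows from Lemma 2 of \cite{Tang17} together with the bound $\deg G_\lambda\le k$ for bent components. Your version merely spells out the non-cancellation step that the paper compresses into $d(H)=\max(d(G),d)$.
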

\begin{proof} To show the first assertion, it needs only to show that $D_{u_{i}}D_{u_{j}}G^{\ast}_{\lambda}(x)=0$ for all $\lambda\in \mathbb{F}^{\ast}_{2^{k}}$ satisfying ${\rm Tr}^{k}_{1}(\lambda)=1$ and any $1\leq i<j\leq k$. However, this can be seen from Proposition \ref{Li}. Noting that $u_{i_{1}},\ldots, u_{i_{\tau}}$ are linearly independent over $\mathbb{F}_{2}$, we have that the algebraic degree of the univariate function $F({\rm Tr}^{n}_{1}(u_{i_{1}}x), {\rm Tr}^{n}_{1}(u_{i_{2}}x),\ldots, {\rm Tr}^{n}_{1}(u_{i_{\tau}}x))$ is equal to $d$ by Lemma 2 of \cite{Tang17}. However, for any vectorial bent function, its algebraic degree is at most $k$, hence the last assertion follows from the fact that the algebraic degree $d(H)$ of $H$ is equal to ${\rm max}(d(G), d)=k$. \end{proof}

\begin{coro}\label{b-vp} With the same conditions of Theorem \ref{b}. Let $t$ be a positive integer. Let $F_{i}(X_{1}, \ldots, X_{k})$, $i=1,
\ldots, t$, be any reduced polynomials in $\mathbb{F}_{2}[X_{1}, \ldots, X_{k}]$. Set $f_{i}(x):=F_{i}({\rm Tr}^{n}_{1}(u_{1}x)), \ldots,{\rm Tr}^{n}_{1}(u_{k}x))$ for each $i=1,\ldots,t$. Then $\widehat{H}(x)=(\sum\limits^{2^{r}-1}_{i=1}x^{(i2^{k-r}+1)(2^{k}-1)+1},f_{1}(x),\ldots,f_{t}(x))$ is a vectorial plateaued $(n,k+t)$-function if and only if the $(n,t)$-function $(f_{1}(x),\ldots,f_{t}(x))$ is vectorial plateaued. In particular, if $k>2$ and $f_{i}$  is a quadratic function for each $i=1,\ldots,t$, then $\widehat{H}(x)$ is a non-quadratic vectorial plateaued function.
\end{coro}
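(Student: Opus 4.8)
The plan is to obtain the equivalence as an immediate specialization of Corollary~\ref{vecplateaued}, and then to settle the ``in particular'' clause by a plateaued-ness argument together with a degree count.

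First I would check that the present $G$ and the chosen basis $\{u_1,\dots,u_k\}$ fit the hypotheses of Corollary~\ref{vecbent} (and hence of Corollary~\ref{vecplateaued}) with $m=k$ and $\tau=k$. Since $\{u_1,\dots,u_k\}$ is a basis of $\mathbb{F}_{2^k}$ over $\mathbb{F}_2$, every $u_i$ lies in $\mathbb{F}^{\ast}_{2^k}$. By Proposition~\ref{Li} we have $D_aD_bG^{\ast}_{\lambda}(x)=0$ for all $a,b\in\mathbb{F}^{\ast}_{2^k}$ and all $\lambda\in\mathbb{F}^{\ast}_{2^k}$; in particular $D_{u_i}D_{u_j}G^{\ast}_{\lambda}=0$ for all $1\le i<j\le k$ and every $\lambda$ with ${\rm Tr}^k_1(\lambda)=1$, so each such $G^{\ast}_{\lambda}$ satisfies property $(\mathbf{P}_k)$ with defining set $\{u_1,\dots,u_k\}$. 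The stated equivalence is then exactly the conclusion of Corollary~\ref{vecplateaued}.

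Second, for the ``in particular'' statement I would argue plateaued-ness first. If every $f_i$ is quadratic, then every nonzero $\mathbb{F}_2$-linear combination $\langle v,(f_1,\dots,f_t)\rangle$ is again a quadratic Boolean function, and any quadratic Boolean function is plateaued; hence $(f_1,\dots,f_t)$ is vectorial plateaued and, by the equivalence just established, $\widehat{H}$ is vectorial plateaued. For non-quadraticity, note that $\widehat{H}=(G,f_1,\dots,f_t)$ is a concatenation, so $\deg\widehat{H}\ge\deg G$ and it suffices to show $\deg G\ge 3$. The exponent $(2^{k-r}+1)(2^k-1)+1=2^{2k-r}+2^k-2^{k-r}$ occurs in $G$ through the summand ${\rm Tr}^n_k(x^{(2^{k-r}+1)(2^k-1)+1})$, and its binary expansion consists of the single bit at position $2k-r$ together with the block of $r$ consecutive bits at positions $k-r,\dots,k-1$; as $2k-r>k-1$ these blocks do not overlap, so this exponent has $2$-weight $r+1$. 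Since $1<r<k$ forces $r\ge 2$ (equivalently $k>2$), we obtain $\deg G\ge r+1\ge 3$, whence $\deg\widehat{H}\ge 3$ and $\widehat{H}$ is non-quadratic.

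The genuinely routine parts are the reduction to Corollary~\ref{vecplateaued} and the remark that quadratics are plateaued. The only delicate point, and the step I would treat most carefully, is the degree of $G$: I must confirm that the monomial $x^{(2^{k-r}+1)(2^k-1)+1}$ survives in $G$ rather than cancelling. This follows because all coefficients in $G$ equal $1$ and the exponents $d_i$ together with their $2^k$-conjugates and the central exponent $(2^k+1)2^{k-1}$ are pairwise distinct modulo $2^n-1$; one also checks that $(2^{k-r}+1)(2^k-1)+1<2^n-1$, so no reduction modulo $2^n-1$ lowers its weight below $r+1$.
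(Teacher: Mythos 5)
Your proposal is correct and follows essentially the same route as the paper: the equivalence is read off from Corollary~\ref{vecplateaued} (with the hypotheses supplied by Proposition~\ref{Li}, exactly as in Theorem~\ref{b}), and the ``in particular'' clause comes from the facts that quadratic vectorial functions are plateaued and that the Niho part has algebraic degree exceeding $2$. The only difference is that you justify the degree bound explicitly via the $2$-weight $r+1$ of the exponent $(2^{k-r}+1)(2^{k}-1)+1$, whereas the paper simply asserts that $r>1$ and $k>2$ force the degree to be greater than $2$; your added detail is sound and if anything strengthens the write-up.
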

\begin{proof} The first assertion can be seen from Corollary \ref{vecplateaued} and Theorem \ref{b}. We need only to show the last assertion. It is well known that quadratic vectorial functions are plateaued, for instance see \cite{Carlet15}.  Since $r>1$ and $k>2$,  the algebraic degree of $H$ is greater than 2. Now, one can conclude that $\widehat{H}(x)$ is a non-quadratic vectorial plateaued function.  \end{proof}

\begin{rmk} With similar arguments as in Corollary \ref{a-vm}, one can prove that the function  $\widehat{H}(x)$ described above is a vectorial function with maximal number of bent components.
\end{rmk}

\noindent $C$. \textbf{ New infinite families of vectorial bent functions from Gold-Like monomial functions}\\

Throughout this subsection, $n=4k$ is a positive integer with $k\geq 2$. Mesnager \cite{Mes14} pointed out that the monomial function ${\rm Tr}^{n}_{1}(\lambda x^{2^{k}+1})$ is self-dual bent for any $\lambda\in \mathbb{F}_{2^{n}}$ satisfying $\lambda+\lambda^{2^{3k}}=1$. Inspired by this work, we consider in this subsection the $(n,n)$-function $x \mapsto x^{2^{k}+1}$. Note that in \cite[Theorem 3]{Xu18} the authors have shown  that for $a\in \mathbb{F}^{\ast}_{2^{n}}$, ${\rm Tr}^{n}_{k}(a x^{2^{k}+1})$ is a vectorial bent $(n,k)$-function if and only if $a\not\in \langle \varrho^{2^{k}+1} \rangle $, where $\varrho$ is a primitive element of $\mathbb{F}_{2^{n}}$, $\langle a \rangle$ is the cyclic subgroup of $\mathbb{F}^{\ast}_{2^{n}}$ generated by $a$. Denote $U=\{x\in \mathbb{F}_{2^{2k}}~|~x^{2^{k}+1}=1\}$. Let $\omega=\varrho^{(2^{k}-1)(2^{2k}+1)}$. Then it can be seen that $\omega\in U\backslash \{1\}\subseteq \mathbb{F}_{2^{2k}}$, $\omega\not\in \langle \varrho^{2^{k}+1} \rangle$, and $\omega+\omega^{2^{k}}\neq 0$. Let $G(x)={\rm Tr}^{4k}_{k}(\omega x^{2^{k}+1})$. Thus, $G$ is a vectorial bent $(n,k)$-function by Theorem 3 of \cite{Xu18}.

Now, in order to construct vectorial bent functions of the form \eqref{cons1}, firstly one has to determine the dual $G^{\ast}_{\lambda}(x)$ of $G_{\lambda}(x)={\rm Tr}^{k}_{1}(\lambda {\rm Tr}^{4k}_{k}(\omega x^{2^{k}+1}))$ $={\rm Tr}^{n}_{1}(\lambda \omega x^{2^{k}+1})$ for all $\lambda \in \mathbb{F}^{\ast}_{2^{k}}$ such that ${\rm Tr}^{k}_{1}(\lambda)=1$.  We need the following lemma which gives the dual $G^{\ast}_{\lambda}(x)$ for all $\lambda \in \mathbb{F}^{\ast}_{2^{k}}$.

\begin{lem} With the same notations above. Let $\lambda_{0}=(w+w^{2^{k}})^{-1}$. Then $\lambda_{0}\in \mathbb{F}^{\ast}_{2^{k}}$, ${\rm Tr}^{k}_{1}(\lambda_{0})=1$, and $G_{\lambda_{0}}(x)$ is a self-dual bent function.  For any $\lambda \in \mathbb{F}^{\ast}_{2^{k}}$, let $\delta$ be the unique element in $\mathbb{F}^{\ast}_{2^{k}}$ such that $\delta^{2^{k}+1}=\lambda\lambda^{-1}_{0}$. Then $G^{\ast}_{\lambda}(x)=G_{\lambda_{0}}(\delta^{-1}x)$. In particular, for any $a, b\in \mathbb{F}^{\ast}_{2^{2k}}$ satisfying $ab^{2^{k}}\in \mathbb{F}_{2^{k}}$, $D_{a}D_{b}G^{\ast}_{\lambda}(x)=0$ for all $\lambda \in \mathbb{F}^{\ast}_{2^{k}}$ such that ${\rm Tr}^{k}_{1}(\lambda)=1$.
\end{lem}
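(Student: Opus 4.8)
The plan is to dispatch the five assertions in turn, reducing everything to the self-duality of the single component $G_{\lambda_0}$. First I would pin down $\lambda_0$. Since $\omega\in U\subseteq\mathbb{F}_{2^{2k}}$ satisfies $\omega^{2^k+1}=1$, we have $\omega^{2^k}=\omega^{-1}$, so $\omega+\omega^{2^k}={\rm Tr}^{2k}_{k}(\omega)\in\mathbb{F}_{2^k}$; it is nonzero by the standing hypothesis $\omega+\omega^{2^k}\neq 0$, whence $\lambda_0=(\omega+\omega^{2^k})^{-1}\in\mathbb{F}^{\ast}_{2^k}$. For ${\rm Tr}^{k}_{1}(\lambda_0)=1$ I would note that $\omega\notin\mathbb{F}_{2^k}$ (otherwise $\omega^{2^k+1}=\omega^2=1$ forces $\omega=1$, against $\omega\neq 1$), so the minimal polynomial of $\omega$ over $\mathbb{F}_{2^k}$ is the quadratic $X^2+sX+1$ with $s=\omega+\omega^{2^k}=\lambda_0^{-1}$. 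The standard irreducibility criterion for quadratics over $\mathbb{F}_{2^k}$ then yields ${\rm Tr}^{k}_{1}(s^{-2})=1$, and since ${\rm Tr}^{k}_{1}(s^{-2})={\rm Tr}^{k}_{1}(s^{-1})={\rm Tr}^{k}_{1}(\lambda_0)$, the claim follows.

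Next, for the self-duality of $G_{\lambda_0}(x)={\rm Tr}^{n}_{1}(\lambda_0\omega\,x^{2^k+1})$ I would invoke Mesnager's criterion recalled at the start of this subsection \cite{Mes14}: ${\rm Tr}^{n}_{1}(\mu x^{2^k+1})$ is self-dual bent whenever $\mu+\mu^{2^{3k}}=1$. Taking $\mu=\lambda_0\omega$ and using $\lambda_0\in\mathbb{F}_{2^k}$ (so $\lambda_0^{2^{3k}}=\lambda_0$) together with $\omega\in\mathbb{F}_{2^{2k}}$ (so $\omega^{2^{3k}}=\omega^{2^k}$), one computes $\mu+\mu^{2^{3k}}=\lambda_0(\omega+\omega^{2^k})=\lambda_0\lambda_0^{-1}=1$. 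Hence $G_{\lambda_0}$ is self-dual bent, i.e.\ $G^{\ast}_{\lambda_0}=G_{\lambda_0}$.

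For the dual formula I would exploit the homogeneity of $x^{2^k+1}$. For $\lambda\in\mathbb{F}^{\ast}_{2^k}$ the element $\delta\in\mathbb{F}^{\ast}_{2^k}$ with $\delta^{2^k+1}=\lambda\lambda_0^{-1}$ exists uniquely because on $\mathbb{F}_{2^k}$ one has $\delta^{2^k+1}=\delta^2$ and squaring is bijective. Then $G_{\lambda_0}(\delta x)={\rm Tr}^{n}_{1}(\lambda_0\omega\,\delta^{2^k+1}x^{2^k+1})={\rm Tr}^{n}_{1}(\lambda\omega\,x^{2^k+1})=G_{\lambda}(x)$. Applying the dilation identity established in the proof of Proposition~\ref{Li} (if $g(x)=h(\delta x)$ then $g^{\ast}(x)=h^{\ast}(\delta^{-1}x)$) with $g=G_{\lambda}$ and $h=G_{\lambda_0}$, and using $G^{\ast}_{\lambda_0}=G_{\lambda_0}$, gives $G^{\ast}_{\lambda}(x)=G_{\lambda_0}(\delta^{-1}x)$.

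Finally, for the second-order derivative I would use that $G_{\lambda_0}$ is quadratic, so its second-order derivatives are constant: expanding $(x+u)^{2^k+1}$ shows $D_{u}D_{v}G_{\lambda_0}(x)={\rm Tr}^{n}_{1}(\lambda_0\omega(u^{2^k}v+uv^{2^k}))$. Since $G^{\ast}_{\lambda}(x)=G_{\lambda_0}(\delta^{-1}x)$ with $\delta\in\mathbb{F}_{2^k}$, a change of variable gives $D_{a}D_{b}G^{\ast}_{\lambda}(x)={\rm Tr}^{n}_{1}(\lambda_0\omega\,\delta^{-2}(a^{2^k}b+ab^{2^k}))$. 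The key observation is that for $a,b\in\mathbb{F}_{2^{2k}}$ the condition $ab^{2^k}\in\mathbb{F}_{2^k}$ is equivalent to $(ab^{2^k})^{2^k}=ab^{2^k}$, i.e.\ $a^{2^k}b=ab^{2^k}$, i.e.\ $a^{2^k}b+ab^{2^k}=0$; hence the derivative vanishes (and in fact it vanishes for every $\lambda\in\mathbb{F}^{\ast}_{2^k}$, the restriction ${\rm Tr}^{k}_{1}(\lambda)=1$ being the regime needed to feed Corollary~\ref{vecbent}). I expect the main difficulty to be organizational rather than computational: one must coordinate the three nested fields $\mathbb{F}_{2^k}\subset\mathbb{F}_{2^{2k}}\subset\mathbb{F}_{2^{4k}}$ so that the Frobenius power $2^{3k}$ in the self-dual condition and the power $2^k$ in the derivative collapse exactly as claimed, and in particular keep straight which power of Frobenius fixes $\lambda_0$, $\omega$, $\delta$, and the product $ab^{2^k}$.
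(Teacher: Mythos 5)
Your proposal is correct and follows essentially the same route as the paper: verify $\lambda_0\omega+(\lambda_0\omega)^{2^k}=1$ to get self-duality of $G_{\lambda_0}$ from Mesnager's criterion, pull the dual of $G_\lambda=G_{\lambda_0}(\delta x)$ back through the dilation identity from Proposition \ref{Li}, and kill the second-order derivative via $a^{2^k}b+ab^{2^k}=0$ when $ab^{2^k}\in\mathbb{F}_{2^k}$. The only difference is that you justify ${\rm Tr}^k_1(\lambda_0)=1$ in detail via the irreducibility criterion for the minimal polynomial $X^2+sX+1$ of $\omega$, a step the paper merely asserts; that is a welcome addition, not a deviation.
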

\begin{proof}  Since $\omega\in U\backslash \{1\}\subseteq \mathbb{F}_{2^{2k}}\backslash \mathbb{F}_{2^{k}} $, we have $w+w^{2^{k}}\neq 0$, $\lambda_{0}=(w+w^{2^{k}})^{-1}\in \mathbb{F}^{\ast}_{2^{k}}$, and ${\rm Tr}^{k}_{1}(\lambda_{0})=1$. Note that for $\lambda\in \mathbb{F}_{2^{n}}$, $\lambda+\lambda^{2^{3k}}=1$ is equivalent to $\lambda^{2^{k}}+\lambda=1$. For those $\lambda$, Mesnager has showed that ${\rm Tr}^{n}_{1}(\lambda x^{2^{k}+1})$ is a self-dual bent function \cite[Lemma 23]{Mes14}. Now, $\lambda_{0}\omega+(\lambda_{0}\omega)^{2^{k}}=\lambda_{0}(\omega+\omega^{2^{k}})=1$. Therefore, $G_{\lambda_{0}}(x)={\rm Tr}^{n}_{1}(\lambda_{0} \omega x^{2^{k}+1})$ is a self-dual bent function. Note that {\rm gcd}$(2^{k}+1, 2^{k}-1)$=1, for a given $\lambda \in \mathbb{F}^{\ast}_{2^{k}}$, there exists a unique element $\delta \in \mathbb{F}^{\ast}_{2^{k}}$ such that $\delta^{2^{k}+1}=\lambda\lambda^{-1}_{0}$. Then we have $G_{\lambda}(x)={\rm Tr}^{n}_{1}(\lambda \omega x^{2^{k}+1})={\rm Tr}^{n}_{1}(\lambda \lambda^{-1}_{0}\lambda_{0}\omega x^{2^{k}+1})={\rm Tr}^{n}_{1}(\lambda_{0}\omega (\delta x)^{2^{k}+1})=G_{\lambda_{0}}(\delta x)$. By similar arguments as in Proposition \ref{Li} and the fact that $G^{\ast}_{\lambda_{0}}(x)$ is self-dual,  we have $G^{\ast}_{\lambda}(x)=G^{\ast}_{\lambda_{0}}(\delta^{-1}x)=G_{\lambda_{0}}(\delta^{-1}x).$ Note that for any $a, b\in \mathbb{F}_{2^{n}}$, $D_{a}D_{b}G^{\ast}_{\lambda}(x)=D_{a}D_{b}G_{\lambda_{0}}(\delta^{-1}x)={\rm Tr}^{n}_{1}(\lambda^{2}_{0}\lambda^{-1}\omega(a^{2^{k}}b+ab^{2^{k}})).$ Thus, if $ab^{2^{k}}\in \mathbb{F}_{2^{k}}$, then  $D_{a}D_{b}G^{\ast}_{\lambda}(x)=0$. We are done. \end{proof}

\begin{thm}\label{c} Let $n=4k$ with $k\geq 2$ and $\tau$ be an integer such that $1\leq \tau \leq k$. Let $\{u_{1},\ldots,u_{k}\}\subseteq \mathbb{F}^{\ast}_{2^{2k}}$ such that $u_{i}u^{2^{k}}_{j}\in \mathbb{F}^{\ast}_{2^{k}}$ for any $1\leq i<j\leq k$. Let $F(X_{1},X_{2},\ldots, X_{\tau})$ be any reduced polynomial in $\mathbb{F}_{2}[X_{1},X_{2},\ldots, X_{\tau}]$ with algebraic degree $d$. Then $H(x)={\rm Tr}^{n}_{k}(\omega x^{2^{k}+1})+F({\rm Tr}^{n}_{1}(u_{i_{1}}x), \ldots,{\rm Tr}^{n}_{1}(u_{i_{\tau}}x))$ is a vectorial bent $(n,k)$-function, where $\omega$ is a generator of the cyclic group  $U=\{x\in \mathbb{F}_{2^{2k}}~|~x^{2^{k}+1}=1\}$, and $\{i_{1},\ldots, i_{\tau} \} \subseteq \{1,\ldots,k\}$. Furthermore, if $u_{1},\ldots,u_{k}$ are linearly independent over $\mathbb{F}_{2}$ and  $d\geq 2$, then $H(x)$ has algebraic degree $d$. \end{thm}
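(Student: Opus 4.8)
The plan is to read this as a direct instance of Corollary \ref{vecbent}, so that the first assertion reduces entirely to verifying the second-order derivative hypothesis that the preceding Lemma was designed to supply. Concretely, I would take $G(x)={\rm Tr}^{4k}_{k}(\omega x^{2^{k}+1})$, already established to be a vectorial bent $(n,k)$-function through Theorem 3 of \cite{Xu18}, together with the defining set $\{u_{i_{1}},\ldots,u_{i_{\tau}}\}$, which consists of distinct elements with $\tau\leq k\leq n/2$, so the numerical constraints of Corollary \ref{vecbent} are satisfied. It then suffices to check that for every $\lambda\in\mathbb{F}^{\ast}_{2^{k}}$ with ${\rm Tr}^{k}_{1}(\lambda)=1$, the dual $G^{\ast}_{\lambda}$ satisfies property $(\mathbf{P}_{\tau})$ with this defining set, that is, $D_{u_{i_{s}}}D_{u_{i_{t}}}G^{\ast}_{\lambda}(x)=0$ for all $1\leq s<t\leq\tau$.

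For this verification I would invoke the preceding Lemma, which guarantees $D_{a}D_{b}G^{\ast}_{\lambda}(x)=0$ for all such $\lambda$ whenever $a,b\in\mathbb{F}^{\ast}_{2^{2k}}$ satisfy $ab^{2^{k}}\in\mathbb{F}_{2^{k}}$. The hypothesis of the theorem provides precisely $u_{i}u_{j}^{2^{k}}\in\mathbb{F}^{\ast}_{2^{k}}\subseteq\mathbb{F}_{2^{k}}$ for $i<j$; relabelling the chosen indices in increasing order makes this apply to each pair $(u_{i_{s}},u_{i_{t}})$ directly, and in any event both the operator $D_{a}D_{b}$ and the condition $ab^{2^{k}}\in\mathbb{F}_{2^{k}}$ are symmetric in $a,b$ (applying the Frobenius $x\mapsto x^{2^{k}}$, which fixes $\mathbb{F}_{2^{k}}$, turns $u_{i}u_{j}^{2^{k}}$ into $u_{i}^{2^{k}}u_{j}$ since $u_{j}^{2^{2k}}=u_{j}$). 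Hence $D_{u_{i_{s}}}D_{u_{i_{t}}}G^{\ast}_{\lambda}=0$ for every relevant pair and every relevant $\lambda$, property $(\mathbf{P}_{\tau})$ holds, and Corollary \ref{vecbent} yields that $H$ is vectorial bent. I expect no genuine obstacle here: the real work resides in the Lemma (the dual computation $G^{\ast}_{\lambda}=G_{\lambda_{0}}(\delta^{-1}x)$ and the resulting derivative identity), and this part of the theorem is its clean harvest.

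For the degree claim I would argue as follows. Since $u_{1},\ldots,u_{k}$ are linearly independent over $\mathbb{F}_{2}$, so is the subfamily $u_{i_{1}},\ldots,u_{i_{\tau}}$, and Lemma 2 of \cite{Tang17} gives that $g(x):=F({\rm Tr}^{n}_{1}(u_{i_{1}}x),\ldots,{\rm Tr}^{n}_{1}(u_{i_{\tau}}x))$ has algebraic degree exactly $d$. As $G$ is quadratic, every component $H_{\lambda}=G_{\lambda}+{\rm Tr}^{k}_{1}(\lambda)\,g$ has degree at most $\max(2,d)=d$, so $\deg H\leq d$. For the reverse inequality, when $d>2$ any component with ${\rm Tr}^{k}_{1}(\lambda)=1$ has its degree-$d$ part equal to that of $g$ and hence degree $d$; the delicate case, and the point I expect to be the main obstacle, is $d=2$, where the degree-$2$ parts of $G_{\lambda}$ and $g$ might a priori cancel. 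The clean way around this is to use the vectorial structure: for $k\geq 2$ there exists $\lambda\in\mathbb{F}^{\ast}_{2^{k}}$ with ${\rm Tr}^{k}_{1}(\lambda)=0$, for which $H_{\lambda}=G_{\lambda}$ is a quadratic bent function, hence of degree exactly $2$, giving $\deg H\geq 2=d$ already. In either case $\deg H=d$. Alternatively, one could rule out cancellation at $d=2$ directly by comparing the associated symplectic forms, that of the bent $G_{\lambda}$ being non-degenerate of rank $n$ while that of $g$ is supported on the $\leq\tau<n$ dimensional span of the $u_{i_{s}}$; but the trace-zero-component argument is shorter and I would present it as the primary route.
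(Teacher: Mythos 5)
Your proposal is correct and follows essentially the route the paper intends: the paper omits an explicit proof of this theorem, but the pattern from Theorems \ref{a} and \ref{b} is exactly your argument --- apply Corollary \ref{vecbent} with the preceding Lemma supplying $D_{u_i}D_{u_j}G^{\ast}_{\lambda}=0$ from $u_iu_j^{2^k}\in\mathbb{F}_{2^k}$, then Lemma 2 of \cite{Tang17} for the degree. Your extra care with the $d=2$ cancellation (via a component with ${\rm Tr}^k_1(\lambda)=0$) is a point the paper glosses over, so your writeup is if anything more complete.
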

\begin{rmk} Let the notations be defined in Theorem \ref{c}. Let $\{v_{1},\ldots,v_{k} \}$ be a basis of $\mathbb{F}_{2^{k}}$ over $\mathbb{F}_{2}$ and $\varsigma$ be any element in $U\backslash\{1\}$. Set $u_{i}:=v_{i}\varsigma$, $i=1,\ldots k$, it is clear that $u_{i}u^{2^{k}}_{j}\in \mathbb{F}_{2^{k}}$ for any $1\leq i<j\leq k$. This means that we have many choices of $u^{\prime}_{i}s$ in Theorem \ref{c} to get the desired vectorial bent function.
\end{rmk}

\begin{coro}\label{c-vm} Assuming conditions of Theorem \ref{c}, and $t$ a positive integer. Let $F_{i}(X_{1},\ldots, X_{k})$, $i=1,
\ldots,t$, be any reduced polynomials in $ \mathbb{F}_{2}[X_{1},\ldots,X_{k}]$. Set $f_{i}(x):=F_{i}({\rm Tr}^{n}_{1}(u_{1}x)), \ldots,{\rm Tr}^{n}_{1}(u_{k}x))$ for each $1\leq i\leq t$. Then $\widehat{H}(x)=(x^{2^{k}+1},f_{1}(x),\ldots,f_{t}(x))$ is a vectorial plateaued $(n,k+t)$-function if and only if the $(n,t)$-function $(f_{1}(x),\ldots,f_{t}(x))$ is vectorial plateaued.
\end{coro}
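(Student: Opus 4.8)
The plan is to obtain this corollary as a direct specialization of the generic result of Corollary \ref{vecplateaued}, following verbatim the template already used for Corollary \ref{a-vp} in Subsection $A$. The substantive work has in fact been completed upstream: Theorem \ref{c} (together with the lemma preceding it) has already verified that the vectorial bent $(n,k)$-function $G(x)={\rm Tr}^{n}_{k}(\omega x^{2^{k}+1})$ and the elements $u_{1},\ldots,u_{k}$, which satisfy $u_{i}u_{j}^{2^{k}}\in\mathbb{F}_{2^{k}}^{*}$, meet the hypotheses of Corollary \ref{vecbent}. In particular, for every $\lambda\in\mathbb{F}_{2^{k}}^{*}$ with ${\rm Tr}^{k}_{1}(\lambda)=1$, the dual $G^{\ast}_{\lambda}$ satisfies property $(\mathbf{P}_{\tau})$ with defining set $\{u_{1},\ldots,u_{k}\}$. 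So the corollary is essentially a bookkeeping consequence of Corollary \ref{vecplateaued} with $m=k$ and $\tau=k$, the latter choice being legitimate since $1\leq\tau\leq k$ and the $F_{i}$ here involve all $k$ variables $X_{1},\ldots,X_{k}$.

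First I would examine the components of $\widehat{H}$ whose $\mathbb{F}_{2^{k}}$-part is nonzero. For any $(\lambda,v)\in\mathbb{F}_{2^{k}}^{*}\times\mathbb{F}_{2}^{t}$, one has $\langle(\lambda,v),\widehat{H}\rangle=G_{\lambda}+\langle v,(f_{1},\ldots,f_{t})\rangle$. Since $\langle v,(F_{1},\ldots,F_{t})\rangle$ is again a reduced polynomial in $\mathbb{F}_{2}[X_{1},\ldots,X_{k}]$, Theorem \ref{c} (via Corollary \ref{vecbent}) shows that this component is bent; as a bent function is in particular plateaued, every such component is automatically plateaued. Next I would isolate the remaining components, namely those of the form $\langle(0,v),\widehat{H}\rangle=\langle v,(f_{1},\ldots,f_{t})\rangle$ with $v\in\mathbb{F}_{2}^{t}\setminus\{0\}$, which are precisely the components of the $(n,t)$-function $(f_{1},\ldots,f_{t})$.

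Combining the two families, $\widehat{H}$ is vectorial plateaued if and only if all of its components are plateaued; since the components with $\lambda\neq0$ are already plateaued, this holds if and only if each $\langle v,(f_{1},\ldots,f_{t})\rangle$ is plateaued, i.e. if and only if $(f_{1},\ldots,f_{t})$ is vectorial plateaued. This yields the claimed equivalence. I do not anticipate a genuine obstacle: the only content beyond citing Corollary \ref{vecplateaued} is to keep the indexing straight (all $k$ variables are used, so $\tau=k$) and to note that bentness of the $\lambda\neq0$ components reduces the plateaued condition entirely to the block $(f_{1},\ldots,f_{t})$. If any care is required, it is only in confirming that the vectorial bent function in the statement is indeed the $G(x)$ of Theorem \ref{c}, so that Corollary \ref{vecbent} genuinely applies to its components $G_{\lambda}$.
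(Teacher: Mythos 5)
Your proposal is correct and follows exactly the paper's route: the paper's own proof is the one-line citation of Corollary \ref{vecplateaued} together with Theorem \ref{c}, and your expanded argument is just the proof of Corollary \ref{vecplateaued} specialized to this setting (bent, hence plateaued, components for $\lambda\neq 0$; the remaining components are those of $(f_{1},\ldots,f_{t})$). Your closing remark that the first coordinate should be read as the $G(x)={\rm Tr}^{n}_{k}(\omega x^{2^{k}+1})$ of Theorem \ref{c} rather than the literal $x^{2^{k}+1}$ is a fair catch of a notational slip in the statement, but does not change the argument.
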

\begin{proof} This can be seen from Corollary \ref{vecplateaued} and Theorem \ref{c}. \end{proof}
\begin{rmk} With similar arguments as in Corollary \ref{c-vm}, one has that the function  $\widehat{H}(x)$ described above is a vectorial function with maximal number of bent components.
\end{rmk}

\section{Concluding Remarks}

In this paper, we proposed a generic method to construct vectorial bent (plateaued) functions via the second-order derivatives, and obtained (at least) three infinite families of vectorial bent (plateaued) from the following three classes of $(n,n)$-functions: $G_{1}(x)=x^{2^{k}+1}$ with $n=2k$; $G_{2}(x)=\sum\limits^{2^{r}-1}_{i=1}x^{(i2^{k-r}+1)(2^{k}-1)+1}$ with $n=2k$, $1<r<k$ and ${\rm gcd}(r,k)=1$; $G_{3}(x)=x^{2^{k}+1}$ with $n=4k$. In particular, the generic construction can produce vectorial bent functions with high algebraic degrees and vectorial plateaued functions having the maximal number of bent components.

\end{document}